\documentclass[12 pt]{article}
\usepackage{eurosym}
\usepackage{amssymb}
\usepackage{amsmath}
\usepackage{amsfonts}
\usepackage{graphicx}
\usepackage{geometry}
\usepackage{array}
\usepackage{xcolor}
\usepackage{tikz}
\usepackage{colortbl}
\usepackage{graphicx}
\usepackage{booktabs}
\usepackage{url}

\usetikzlibrary{arrows.meta,automata,positioning}
\newtheorem{theorem}{Theorem}

\newtheorem{example}{Example}

\newtheorem{proposition}{Proposition}
\newtheorem{remark}{Remark}

\newenvironment{proof}[1][Proof]{\noindent\textbf{#1.} }{\ \rule{0.5em}{0.5em}}

\usepackage{setspace}

\begin{document}

\title{Monotonicity in streaming problems\thanks{%
Financial support from grants PID2023-146364NB-I00 and PID2025-168226NB-I00,
funded by MICIU/AEI/10.13039/501100011033/ and by FEDER, and UE is
gratefully acknowledged.}}
\author{\textbf{Gustavo Berganti\~{n}os}\thanks{%
ECOBAS, Universidade de Vigo, ECOSOT, 36310 Vigo, Espa\~{n}a;
gbergant@uvigo.es} \\
\textbf{Juan D. Moreno-Ternero}\thanks{%
Department of Economics, Universidad Pablo de Olavide, 41013 Sevilla, Espa%
\~{n}a; jdmoreno@upo.es}}
\maketitle

\begin{abstract}
We explore the implications of the principle of monotonicity in the problem
of sharing the revenues raised from subscriptions to music streaming
platforms among content providers. We formalize various forms of this
principle as several axioms for popularity indices, which generate sharing
rules in this setting. We show that, combined with some other basic axioms
(such as additivity, symmetry, or pairwise homogeneity), they provide
axiomatic characterizations of focal indices for this problem, 
as well as more general families comprising them. 
\end{abstract}

\bigskip

\noindent \textbf{\textit{JEL numbers}}\textit{: D63, C71, L82, O34.}%
\medskip {} 

\noindent \textbf{\textit{Keywords}}\textit{: Streaming, popularity indices,
revenue allocation, monotonicity, pro-rata.}

\bigskip

\bigskip

\newpage

\bigskip

\bigskip

\section{Introduction}

The music industry underwent major changes since the advent of digital music
and peer-to-peer technologies at the end of the 20th century. Streaming
emerged some years later as a breakthrough, which generated new business
opportunities, including the creation of platforms, key players in the
industry nowadays (e.g., Aguiar, 2017; Aguiar et al., 2024). The most famous
case is Spotify, with 777 million monthly active users worldwide, among
which around 300 million people pay for a premium subscription.\footnote{%
https://newsroom.spotify.com/company-info/} A common practice for streaming
platforms is to distribute around $70\%$ of the revenue received from
subscriptions among artists uploading content therein (e.g., Meyn et al.,
2023). Such an allocation problem, akin to others emerging from related
bundling practices (e.g., Ginsburgh and Zang, 2003; Berganti\~{n}os and
Moreno-Ternero, 2015) 
is the object of study in this paper.

We consider the stylized model introduced by Berganti\~{n}os and
Moreno-Ternero (2025a), whose ingredients are three: a group of artists, a
group of users, and a matrix indicating the streaming times each user
streamed each artist. Based on this input, a popularity \textbf{index}
(function), which measures the importance of each artist, is constructed.
The shares received by each artist from the revenues generated in each
problem are based on such an index. This model has been studied in Berganti%
\~{n}os and Moreno-Ternero (2025a, 2025b, 2026a, 2026b), Gon\c{c}%
alves-Dosantos et al. (2025a, 2025b), Dietzenbacher et al. (2026),
Dietzenbacher and N\'{u}\~{n}ez-Lugilde (2026) and Feng and Ma (2026). 
The novelty here is to formalize various forms of the principle of \textit{%
monotonicity}, as several axioms for popularity indices, and to explore
their implications in this setting.

Monotonicity is a general principle of fair division with a long tradition
in the literature.\footnote{%
Early instances are Megiddo (1974), Kalai and Smorodinsky (1975), Kalai
(1977), Thomson and Myerson (1980), Young (1985, 1987, 1988), Roemer (1986),
Chun and Thomson (1988), Moulin and Thomson (1988), or Thomson (1999). More
recent instances are Berganti\~{n}os and Moreno-Ternero (2022) or Abe et al.
(2025).} It states that when the underlying data of a problem change in a
specific way, the solution should change accordingly. A typical formulation
is as follows. Let $P$ and $\widehat{P}$ be two problems such that the
situation of agent $i$ at $P$ is \textquotedblleft better" than at $\widehat{%
P}$. Then, the allocation for agent $i$ at $P$ should not be worse than at $%
\widehat{P}$. Depending on the specific meaning of \textquotedblleft
better", various monotonicity axioms could be defined. 
We shall consider several meanings in our setting, giving rise to the
following specific axioms:



\emph{Overall monotonicity}: when all entries in the streaming matrix
increase.


\emph{Artist monotonicity}: when the streams of an artist increase, user by
user, and the streams of the other artists remain the same.


\emph{Total artist monotonicity}: when the total streams of an artist
increase and the total streams of the other artists remain the same.


\emph{Reciprocal artist monotonicity}: when the streams of an artist remain
the same, whereas the streams of the other artists decrease.


\emph{User monotonicity}: when the streams of a user increase, artist by
artist, and the other users remain the same.

\bigskip

The most frequent sharing methods to solve streaming problems (more
precisely, the allocation problem underlying the revenues raised by
streaming) are the so-called \textbf{pro-rata} method, which rewards artists
proportionally to the total number of streams and the so-called \textbf{%
user-centric} method, which rewards artists so that the revenue generated by
each user is divided among the artists streamed by the user, proportionally
to the total number of streams (e.g., Alaei et al., 2022). One of the side
messages of our analysis here is that the former performs better with
respect to monotonicity properties than the latter. This contrasts with
earlier axiomatic studies for streaming problems, which favor the
user-centric index with respect to the pro-rata index (e.g., Berganti\~{n}os
and Moreno-Ternero, 2025a) but it aligns with other (non-axiomatic) studies
favoring the latter (e.g., Alaei et al., 2022; Yu, 2026).

Another (admittedly, somewhat naive) method that performs extremely well
with respect to monotonicity properties is the \textbf{egalitarian} method
(e.g., Gon\c{c}alves-Dosantos et al., 2025a,b), which simply rewards all
artists equally, independently of streams. This should not be a surprise, as
some monotonicity properties have strong egalitarian implications in related
settings (e.g., Flores-Szwagrzak et al., 2026). Finally, we also explore the
so-called \textbf{Shapley} method (e.g., Berganti\~{n}os and Moreno-Ternero,
2025b, Gon\c{c}alves-Dosantos et al., 2025a), which performs similarly to
the user-centric index with respect to monotonicity axioms.

The main outcome of our paper is to provide several characterization results
thanks to the monotonicity axioms. First, we show the \textit{pro-rata}
index is characterized by \textit{additivity} (if we can present a problem
as the sum of smaller problems, then the index in the original problem
should coincide with the sum of the indices in the smaller problems), 
\textit{pairwise homogeneity} (if each user has streamed an artist a certain
times more than another artist, then the index should preserve that ratio)
and \textit{total artist monotonicity}, or \textit{reciprocal artist
monotonicity} and \textit{anonymity on users} (the name of users does not
matter). Second, the \textit{Shapley} index is characterized by \textit{%
additivity}, \textit{symmetry} (if the streams of two artists coincide for
each user, then the index for both artists should coincide), \textit{budget
balance} (the index fully divides the overall budget), \textit{null artist}
(if an artist has zero streams, then the importance of this artist should be 
$0$), \textit{non-unilateral manipulability} (an artist cannot increase the
index unilaterally upon increasing its number of streams from fans) and 
\textit{artist monotonicity}. Finally, the \textit{egalitarian} index is
characterized by \textit{symmetry}, \textit{budget balance} and either 
\textit{overall monotonicity}, \textit{user monotonicity} or \textit{%
reciprocal artist monotonicity}.

We extend our analysis, to uncover further the structure of the space of
indices, by exploring the implications of each of the monotonicity axioms
within the family of \textit{decomposable indices} (e.g., Berganti\~{n}os
and Moreno-Ternero, 2026b), the most general family of indices in the
literature, simply characterized by the two basic axioms of \textit{%
additivity} and \textit{null artist}.

The rest of the paper is organized as follows. In Section 2, we introduce
the model and the basic definitions (including those of axioms and indices),
as well as some running examples. In Section 3, we provide the
characterization results for specific indices described above. In Section 4,
we provide the characterization results for the sub-families of decomposable
indices emerging after imposing each of the monotonicity axioms in our
paper. We conclude in Section 6.

\section{Streaming problems}

We consider the model introduced by Berganti\~{n}os and Moreno-Ternero
(2025a). Let $\mathbb{N}$ represent the set of all potential artists and $%
\mathbb{M}$ the set of all potential users (of music streaming platforms). 
Each platform involves a (finite) set of artists $N\subset \mathbb{N}$ and a
(finite) set of users $M\subset \mathbb{M}$. We typically assume that $%
\mathbb{N}$ and $\mathbb{M}$ coincide with the set of positive integer
numbers (namely, $\mathbb{Z}_+=\left\{ 1,2,...\right\} )$, $N=\left\{
1,...,n\right\} $ and $M=\left\{ 1,...,m\right\} $. For each $(i,j)\in
N\times M$, let $t_{ij}$ denote the times user $j$ streamed contents
uploaded by artist $i$ in the platform, during a certain period of time
(briefly, \textit{streams}). In most platforms, playing a streaming unit is
equivalent to playing a song (for at least 30 seconds). Let $t=\left(
t_{ij}\right) _{i\in N,j\in M}$ denote the matrix of all streams. We assume
that for each $j\in M,$ $\sum\limits_{i\in N}t_{ij}>0$ (namely, each user
has streamed some content).

A \textbf{streaming problem} is a triple $P=\left( N,M,t\right) $. We
normalize the amount paid by each user to $1$. Consequently, the amount to
be divided among artists in a problem is $m$, the number of users. The set
of problems so defined is denoted by $\mathcal{P}$.

Let $\left( N,M,t\right)\in\mathcal{P}$. For each $j\in M,$ we denote by $%
t^{-j}$ the matrix obtained from $t$ by removing the column corresponding to
user $j$. Similarly, for each $i\in N,$ we denote by $t_{-i}$ the matrix
obtained from $t$ by removing the row corresponding to artist $i$. For each
artist $i\in N$, we denote the total number of streams of artist $i$ as $%
T_{i}\left( N,M,t\right) =\sum_{j\in M}t_{ij}$. For each user $j\in M$, we
denote the total number of streams of user $j$ as $T^{j}\left( N,M,t\right)
=\sum_{i\in N}t_{ij}$. 
We define the list of artists of a user as those from which the user has
streamed content. Formally, for each $j\in M$, $L^{j}\left( N,M,t\right)
=\left\{ i\in N:t_{ij}>0\right\}$. The profile of user $j$ is defined as the
streaming vector associated with the user. Namely, $t_{.j}\left(
N,M,t\right) =\left( t_{ij}\right) _{i\in N}$. When no confusion arises we
write $T_{i}$ instead of $T_{i}\left( N,M,t\right) $, $T^{j}$ instead of $%
T^{j}\left( N,M,t\right)$, 
$L^{j}$ instead of $L^{j}\left( N,M,t\right) ,$ and $t_{.j}$ instead of $%
t_{.j}\left( N,M,t\right)$.

\bigskip

We illustrate our model with a basic example to which we shall turn to
throughout our ensuing analysis. Assume two users join a platform to listen
to their favorite artists. Assume artists $1$ and 2 are streamed by user $1$%
, whereas artist $3$ is streamed by user $2$. This situation can be included
in our theoretical model as follows.

\begin{example}
\label{ex 2,3,a} Let $N=\left\{ 1,2,3\right\} ,$ $M=\left\{ 1,2\right\} ,$
and 
\begin{equation*}
t=\left( 
\begin{array}{cc}
10 & 0 \\ 
20 & 0 \\ 
0 & 70%
\end{array}%
\right) .
\end{equation*}
\end{example}

\medskip

A popularity \textbf{index} is a mapping that associates a number to each
artist in each problem. Formally, $I:\mathcal{P}\rightarrow
\bigcup\limits_{N\subset \mathbb{N}}\mathbb{R}_{+}^{n}$ is such that for
each $\left( N,M,t\right) \in \mathcal{P},$ $I\left( N,M,t\right) \in 
\mathbb{R}_{+}^{n}$. Besides, for each pair $i,j\in N$, we are interpreting
that $I_{i}\left( N,M,t\right) \geq I_{j}\left( N,M,t\right) $ if and only
if $i$ is at least as important as $j$ at problem $\left( N,M,t\right) $. We
assume that $\sum\limits_{i\in N}I_{i}\left( N,M,t\right) >0$.

The \textbf{reward} received by each artist $i\in N$ in each problem is
proportional to the importance of that artist in that problem. Formally, 
\begin{equation*}
R_{i}^{I}\left( N,M,t\right) =\frac{I_{i}\left( N,M,t\right) }{%
\sum\limits_{i^{\prime }\in N}I_{i^{\prime }}\left( N,M,t\right) }m.
\end{equation*}

As for each $\lambda >0$ and each index $I$, $R^{\lambda I}=R^{I}$, we
identify an index with all its positive linear transformations, unless
explicitly stated otherwise.







Berganti\~{n}os and Moreno-Ternero (2026b) introduce the most general family
of indices studied in the literature. Formally, we say that $I$ is a \textbf{%
decomposable index} if the following two conditions hold:

\begin{enumerate}
\item There exists a function $d:\mathbb{N}\times \mathbb{M}\times
\bigcup\limits_{N\subset \mathbb{N}}\left( \mathbb{Z}_{+}\cup \left\{
0\right\} \right) ^{n}\rightarrow \mathbb{R}_{+}$ such that for each $i\in
N\subset \mathbb{N}$, each $j\in \mathbb{M}$, and each $x\in \left(\mathbb{Z}%
_{+}\cup \left\{ 0\right\} \right) ^{n}$, $d\left( i,j,x\right) =0$ whenever 
$x_{i}=0.$

\item For each $\left( N,M,t\right) \in \mathcal{P}$ and each $i\in N$, 
\begin{equation*}
I_{i}\left( N,M,t\right) =\sum_{j\in M}d\left( i,j,t._{j}\right) .
\end{equation*}
\end{enumerate}

For each function $d$ satisfying conditions $1$ and $2$, we denote by $I^{d}$
the decomposable index associated with $d$.

Three decomposable indices stand out. 

First, the so-called \textbf{pro-rata} index, which simply measures
importance by the total number of streams. Formally, for each problem $%
\left( N,M,t\right) \in \mathcal{P}$ and each artist $i\in N,$ 
\begin{equation*}
P_{i}\left( N,M,t\right) =T_{i}=\sum_{j\in M}t_{ij}.
\end{equation*}

Thus, the amount received by each artist $i\in N$ under this index is 
\begin{equation*}
R_{i}^{P}\left( N,M,t\right) =\frac{T_{i}}{\sum\limits_{k\in N}T_{k}}m.
\end{equation*}

Second, the so-called \textbf{user-centric} index. The amount paid by each
user is divided among the artists streamed by the user, proportionally to
the total number of streams. Then, the index assigns each artist the sum of
the amounts obtained across users. Formally, for each problem $\left(
N,M,t\right) $ and each artist $i\in N,$ 
\begin{equation*}
U_{i}\left( N,M,t\right) =\sum_{j\in M}\frac{t_{ij}}{T^{j}}.
\end{equation*}%
It is easy to see that $R^{U}\left( N,M,t\right) =U\left( N,M,t\right) .$

\bigskip

Third, the so-called \textbf{Shapley} index. It divides the amount paid by
each user among the artists streamed by the user equally among these artists
(thus, ignoring the streaming times; as opposed to the user-centric index,
which did it proportionally to the streaming times). Formally, for each
streaming problem $\left( N,M,t\right) \in \mathcal{P}$ and each $i\in N, $ 
\begin{equation*}
Sh_{i}\left( N,M,t\right) =\sum_{j\in M:i\in L^{j}}\frac{1}{\left\vert
L^{j}\right\vert }.
\end{equation*}%
As $\sum\limits_{i\in N}Sh_{i}\left( N,M,t\right) =m$, the index and the
associated reward coincide too. Namely, $Sh\left( N,M,t\right)
=R^{Sh_{i}\left( N,M,t\right) }$.

Berganti\~{n}os and Moreno-Ternero (2026b) show that decomposable indices
are characterized by the following two basic axioms.




\bigskip


\textbf{Additivity}. For each triple $\left( N,M^{1},t^{1}\right) ,\left(
N,M^{2},t^{2}\right) ,\left( N,M,t\right) \in \mathcal{P}$ such that $%
M=M^{1}\cup M^{2}$, $M^{1}\cap M^{2}=\varnothing ,$ $t_{ij}=t_{ij}^{1}$ when 
$j\in M^{1}$ and $t_{ij}=t_{ij}^{2}$ when $j\in M^{2},$ 
\begin{equation*}
I\left( N,M,t\right) =I\left( N,M^{1},t^{1}\right) +I\left(
N,M^{2},t^{2}\right) .
\end{equation*}



\textbf{Null artist}. For each $\left( N,M,t\right) \in \mathcal{P}$, and
each $i\in N$ such that $T_{i}=0$, 
\begin{equation*}
I_{i}\left( N,M,t\right) =0.
\end{equation*}

Another index that will be salient in our analysis is the \textbf{egalitarian%
} index, which simply states that all artists have the same importance. This
index obviously violates null artist and, thus, it lies outside the family
of decomposable indices.

Formally, for each problem $\left( N,M,t\right) \in \mathcal{P}$ and each
artist $i\in N,$ 
\begin{equation*}
E_{i}\left( N,M,t\right) =\frac{m}{n}.
\end{equation*}

To illustrate how these four indices work, we resort to Example \ref{ex
2,3,a}. 
\begin{equation*}
\begin{tabular}{cccccc}
& $P\left( N,M,t\right) $ & $R_{i}^{P}\left( N,M,t\right) $ & $U_{i}\left(
N,M,t\right) $ & $Sh_{i}\left( N,M,t\right) $ & $E_{i}\left( N,M,t\right) $
\\ 
1 & 10 & 0.2 & 0.33 & 0.5 & 0.67 \\ 
2 & 20 & 0.4 & 0.67 & 0.5 & 0.67 \\ 
3 & 70 & 1.4 & 1 & 1 & 0.67%
\end{tabular}%
\end{equation*}

\bigskip

\subsection{Monotonicity axioms}

The aim of this paper is to explore the implications in this setting of the
principle of monotonicity, which has played an important role in axiomatic
work. We now consider several axioms modelling various forms of the
principle. First, the standard axiom saying that the index should be
monotonic on the entries of $t$. Formally,

\textbf{Overall monotonicity}. For each pair $\left( N,M,t\right) ,\left(
N,M,t^{\prime }\right) \in \mathcal{P}$ such that $t\geq t^{\prime }$,%
\footnote{%
Given two matrices $t,t^{\prime }$ we write $t\geq t^{\prime }$ whenever $%
t_{ij}\geq t_{ij}^{\prime }$ for each entry $\left( i,j\right)$.} and for
each $i\in N$, 
\begin{equation*}
I_{i}\left( N,M,t\right) \geq I_{i}\left( N,M,t^{\prime }\right) .
\end{equation*}


Second, 
the axiom that says that if the streams of an artist increase, user by user,
and the streams of the other artists remain the same, then the importance of
this artist can not decrease and the importance of the other artists can not
increase.\footnote{%
Feng and Ma (2026) consider a somewhat related axiom in which the increase
in an artist's streams is due to a redistribution of streams a user makes.
That is, an artist increases streams because another artist decreased
streams (by the same margin).} Formally,

\textbf{Artist monotonicity}. For each pair $\left( N,M,t\right) ,\left(
N,M,t^{\prime }\right) \in \mathcal{P}$ and $i\in N$ such that $t_{i.}\geq
t_{i.}^{\prime }$ and $t_{i^{\prime }.}=t_{i^{\prime }.}^{\prime }$ for all $%
i^{\prime }\in N\backslash \left\{ i\right\} ,$ 
\begin{eqnarray*}
I_{i}\left( N,M,t\right) &\geq &I_{i}\left( N,M,t^{\prime }\right) \text{ and%
} \\
I_{i^{\prime }}\left( N,M,t\right) &\leq &I_{i^{\prime }}\left(
N,M,t^{\prime }\right) \text{ for all }i^{\prime }\in N\backslash \left\{
i\right\} .
\end{eqnarray*}


Third, 
a similar but more demanding axiom than artist monotonicity. It says that if
the total streams of an artist increase and the total streams of the other
artists remain the same, then the importance of this artist can not decrease
and the importance of the other artists cannot increase. Formally,

\textbf{Total artist monotonicity.} For each pair $\left( N,M,t\right)
,\left( N,M,t^{\prime }\right) \in \mathcal{P}$ and $i\in N$ such that $%
T_{i}\geq T_{i}^{\prime }$ and $T_{i^{\prime }}=T_{i^{\prime }}^{\prime }$
for all $i^{\prime }\in N\backslash \left\{ i\right\} ,$ 
\begin{eqnarray*}
I_{i}\left( N,M,t\right) &\geq &I_{i}\left( N,M,t^{\prime }\right) \text{ and%
} \\
I_{i^{\prime }}\left( N,M,t\right) &\leq &I_{i^{\prime }}\left(
N,M,t^{\prime }\right) \text{ for all }i^{\prime }\in N\backslash \left\{
i\right\} .
\end{eqnarray*}


Suppose now that the streams of an artist remain the same whereas the
streams of the other artists increase. 
The next axiom says that the importance of this artist can not increase and
the importance of the other artists can not decrease\footnote{%
Berganti\~{n}os and Moreno-Ternero (2022) considers a similar axiom in
broadcasting problems.}. 
Formally,

\textbf{Reciprocal artist monotonicity. }For each pair $\left( N,M,t\right)
,\left( N,M,t^{\prime }\right) \in \mathcal{P}$ and each $i\in N$ such that $%
t_{i.}=t_{i.}^{\prime }$ and $t_{i^{\prime }.}\geq t_{i^{\prime }.}^{\prime
} $ for all $i^{\prime }\in N\backslash \left\{ i\right\} ,$ 
\begin{eqnarray*}
I_{i}\left( N,M,t\right) &\leq &I_{i}\left( N,M,t^{\prime }\right) \text{ and%
} \\
I_{i^{\prime }}\left( N,M,t\right) &\geq &I_{i^{\prime }}\left(
N,M,t^{\prime }\right) \text{ for all }i^{\prime }\in N\backslash \left\{
i\right\} .
\end{eqnarray*}

Finally, an axiom saying that when a user increases its streams, artist by
artist, and the other users remain the same, then the importance of all
artists streamed by the user can not decrease. Formally,

\textbf{User monotonicity}. For each pair $\left( N,M,t\right) ,\left(
N,M,t^{\prime }\right) \in \mathcal{P}$ and each $j\in M$ such that $%
t_{.j}\geq t_{.j}^{\prime }$ and $t_{.j^{\prime }}=t_{.j^{\prime }}^{\prime
} $ for all $j^{\prime }\in M\backslash \left\{ j\right\} ,$ 
\begin{equation*}
I_{i}\left( N,M,t\right) \geq I_{i}\left( N,M,t^{\prime }\right) \text{ for
all }i\in L^{j}\left( N,M,t\right) .
\end{equation*}





\begin{remark}
\label{rel mon axioms}We have the following obvious logical relations among
the monotonicity axioms.

$\left( a\right) $ Overall monotonicity implies user monotonicity.

$\left( b\right) $ Total artist monotonicity implies artist monotonicity.
\end{remark}


\subsection{Extra axioms}

We now consider some extra axioms. The first three axioms are standard in
the axiomatic literature and have been used in many kind of problems,
including streaming problems. The last two axioms are more specific of
streaming problems.

\bigskip

First, the axiom saying that if the streams of two artists coincide for each
user, then the index for both artists should coincide. Formally,

\textbf{Symmetry.} For each $\left( N,M,t\right) \in \mathcal{P}$ and each
pair $i,i^{\prime }\in N$ such that $t_{i.}=t_{i^{\prime }.}$, 
\begin{equation*}
I_{i}\left( N,M,t\right) =I_{i^{\prime }}\left( N,M,t\right) .
\end{equation*}

A strengthening of symmetry is the following. 
Let $\Pi \left( M,M^{\prime }\right) $ with $\left\vert M\right\vert
=\left\vert M^{\prime }\right\vert $ be the set of bijections from $M$ to $%
M^{\prime }.$ For each $\left( N,M,t\right) \in \mathcal{P}$ and each $\pi
\in $ $\Pi \left( M,M^{\prime }\right) $ we denote by $\left( N,M^{\prime
},t^{\pi }\right) $ the problem where for each $i\in N$ and $j\in M^{\prime
},$ $t_{ij}^{\pi }=t_{i\pi ^{-1}\left( j\right)}.$

\textbf{Anonymity on users}. For each $\left( N,M,t\right) \in \mathcal{P}$,
each $\pi \in \Pi \left( M,M^{\prime }\right) ,$ and each $i\in N,$ 
\begin{equation*}
I_{i}\left( N,M,t\right) =I_{i}\left( N,M^{\prime },t^{\pi }\right) .
\end{equation*}

Next, the axiom saying that the index divides the overall budget (or,
equivalently, that the index is normalized and thus coincides with its
associated sharing rule). Formally,

\textbf{Budget balance.} For each $\left( N,M,t\right) \in \mathcal{P},$ 
\begin{equation*}
\sum_{i\in N}I_{i}\left( N,M,t\right) =m.
\end{equation*}

The next axiom, introduced in Berganti\~{n}os and Moreno-Ternero (2025b),
says that if each user has played artist $i$ a certain times more than
artist $i^{\prime },$ then, the index should preserve that ratio. Formally,

\textbf{Pairwise homogeneity.} For each $\left( N,M,t\right) \in \mathcal{P}$%
, each pair $i,k\in N,$ and each $\lambda \geq 0$ such that $t_{ij}=\lambda
t_{kj}$ for all $j\in M,$ $\ $ 
\begin{equation*}
I_{i}\left( N,M,t\right) =\lambda I_{k}\left( N,M,t\right) .
\end{equation*}

We consider a final axiom introduced in Berganti\~{n}os and Moreno-Ternero
(2025b), which says that an artist cannot increase the index unilaterally
upon increasing its number of streams from fans.

\textbf{Non-unilateral manipulability}. For each pair $\left(
N,M,t\right),\left( N,M,t^{\prime }\right) \in \mathcal{P}$, and each $i\in
N $ such that $F_{i}\left( N,M,t\right) =F_{i}\left( N,M,t^{\prime }\right) $%
, $t_{ij}\leq t_{ij}^{\prime }$ for all $j\in M$, and $t_{kj}=t_{kj}^{\prime
}$ for all $k\in N\backslash \left\{ i\right\} $ and $j\in M$, 
\begin{equation*}
I_{i}\left( N,M,t\right) \geq I_{i}\left( N,M,t^{\prime }\right) .
\end{equation*}

\section{Characterizations of specific indices}


We gather in the next result how each of the four indices perform with
respect to the monotonicity axioms, which will pave the way for the
characterization results we present afterwards.


\begin{proposition}
\label{axioms indices} The following statements hold:

$\left( a\right) $ The pro-rata and egalitarian index satisfy all the
monotonicity axioms.

$\left( b\right) $ The user-centric and Shapley index satisfy artist
monotonicity but violate overall monotonicity, total artist monotonicity,
reciprocal artist monotonicity and user monotonicity.


\end{proposition}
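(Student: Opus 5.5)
Part $(a)$ follows directly from the formulas, and part $(b)$ splits into a positive claim (artist monotonicity) and a list of counterexamples. Throughout I work with the index values themselves, since each monotonicity axiom is stated in terms of $I$.

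\textbf{Part $(a)$.} The egalitarian index $E_i(N,M,t)=m/n$ does not depend on $t$. Every monotonicity axiom compares two problems with the same $N$ and $M$, so all required inequalities hold with equality.

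For the pro-rata index, $P_i=T_i$ depends only on the row of artist $i$, and it is weakly increasing in each entry of that row.
\begin{itemize}
\item \emph{Overall monotonicity.} If $t\geq t'$, then $T_i\geq T_i'$ for every $i$.
\item \emph{User monotonicity.} This follows in the same way, or from Remark~\ref{rel mon axioms}.
\item \emph{Artist monotonicity and total artist monotonicity.} Artist $i$'s total weakly rises and the other totals are unchanged, which is exactly what these axioms require.
\item \emph{Reciprocal artist monotonicity.} $T_i$ is unchanged and each $T_{i'}$ weakly rises.
\end{itemize}

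\textbf{Artist monotonicity in part $(b)$.} Fix a user $j$ and raise $t_{ij}$ while keeping the other entries of the profile fixed. For the user-centric index, the term $t_{ij}/T^j=t_{ij}/(t_{ij}+\sum_{k\neq i}t_{kj})$ weakly increases. For each $k\neq i$, the term $t_{kj}/T^j$ weakly decreases, since only $T^j$ grows. Summing over users gives both inequalities.

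For the Shapley index, the only relevant change is whether $t_{ij}$ passes from $0$ to a positive value.
\begin{itemize}
\item If it does, $i$ enters $L^j$. Then $i$ gains the term $1/|L^j|$, and every other listed artist's term falls from $1/(|L^j|-1)$ to $1/|L^j|$.
\item Otherwise $L^j$ is unchanged, and so is every term.
\end{itemize}

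\textbf{Violations in part $(b)$.} I use small explicit examples and write each artist's row.

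\emph{Overall and user monotonicity.} Take one user and two artists, with $t'=(1,1)$ and $t=(1,2)$. Then $U_1$ falls from $1/2$ to $1/3$. For Shapley, take $t'=(1,0)$ and $t=(1,1)$; then $Sh_1$ falls from $1$ to $1/2$. Artist $1$ is in $L^j$, so both examples violate overall monotonicity and user monotonicity.

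\emph{Reciprocal artist monotonicity.} Here the violation must come from the ``others do not lose'' clause. Take one user, three artists, and fixed artist $1$. For the user-centric index, let $t'=(1,1,1)$ and $t=(1,1,5)$; then $U_2$ drops from $1/3$ to $1/7$. For Shapley, let $t'=(1,1,0)$ and $t=(1,1,1)$; then $Sh_2$ drops from $1/2$ to $1/3$.

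\emph{Total artist monotonicity.} Move artist $1$'s stream between users while keeping all totals fixed. For the user-centric index, take two users with $t'$ having rows $(1,0)$ and $(1,3)$, and $t$ having rows $(0,1)$ and $(1,3)$. Then $U_1$ goes from $1/2$ to $1/4$, although $T_1=T_1'$ and $T_2=T_2'$. For Shapley, take three artists with $t'$ rows $(1,0)$, $(1,1)$, $(0,1)$, and $t$ rows $(0,1)$, $(1,1)$, $(0,1)$. Then $Sh_1$ goes from $1/2$ to $1/3$, again with all totals fixed.

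The only delicate point is the reciprocal axiom. The natural examples lower the fixed artist's index, which the axiom allows, so the counterexamples must target the clause requiring that the other artists not lose.
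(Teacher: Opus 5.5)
Your proposal is correct and follows the paper's approach. The positive claims are direct checks from the formulas, which the paper leaves as straightforward. Each violation is then exhibited by a small explicit counterexample of the same kind the paper uses: shifting an artist's streams across users for total artist monotonicity, and targeting the ``other artists'' clause for reciprocal artist monotonicity. The only cosmetic difference is that your examples violate user and overall monotonicity directly, whereas the paper deduces the overall violation from the user one via Remark~\ref{rel mon axioms}.
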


\bigskip 

\begin{proof}
We concentrate on the statements on violations as the remaining ones are
straightforward to check. Let us start with the user-centric index. To show
that it does not satisfy \textit{total artist monotonicity}, let $N=\left\{
1,2,3\right\} ,$ $M=\left\{ 1,2\right\} ,$ 
\begin{equation*}
t=\left( 
\begin{array}{cc}
0 & 30 \\ 
10 & 0 \\ 
0 & 70%
\end{array}%
\right) ,\text{ and }t^{\prime }=\left( 
\begin{array}{cc}
10 & 0 \\ 
10 & 0 \\ 
0 & 70%
\end{array}%
\right) .
\end{equation*}%
Then, 
\begin{equation*}
U_{1}\left( N,M,t\right) =0.3<0.5=U_{1}\left( N,M,t^{\prime }\right) .
\end{equation*}%
As for \textit{reciprocal artist monotonicity}, let $N=\left\{ 1,2,3\right\}
,$ $M=\left\{ 1,2\right\} ,$ 
\begin{equation*}
t=\left( 
\begin{array}{cc}
10 & 0 \\ 
0 & 20 \\ 
0 & 30%
\end{array}%
\right) ,\text{ and }t^{\prime }=\left( 
\begin{array}{cc}
10 & 0 \\ 
0 & 10 \\ 
0 & 10%
\end{array}%
\right) .
\end{equation*}%
Then, 
\begin{equation*}
U_{2}\left( N,M,t\right) =0.4<0.5=U_{2}\left( N,M,t^{\prime }\right) .
\end{equation*}%
As for \textit{user monotonicity}, let $N=\left\{ 1,2,3\right\} ,$ $%
M=\left\{ 1\right\} ,$ $t=\left( 10,90,0\right) $ and $t^{\prime }=\left(
5,5,0\right) $. Then, 
\begin{equation*}
U_{1}\left( N,M,t\right) =0.1<0.5=U_{1}\left( N,M,t^{\prime }\right) .
\end{equation*}%
Finally, the above also implies that it does not satisfy \textit{overall
monotonicity} either (recall Remark \ref{rel mon axioms}).

We now turn to the Shapley index. To show that it does not satisfy \textit{%
total artist monotonicity}, let $N=\left\{ 1,2,3\right\} ,$ $M=\left\{
1,2\right\} ,$ 
\begin{equation*}
t=\left( 
\begin{array}{cc}
0 & 30 \\ 
10 & 0 \\ 
0 & 70%
\end{array}%
\right) ,\text{ and }t^{\prime }=\left( 
\begin{array}{cc}
10 & 10 \\ 
10 & 0 \\ 
0 & 70%
\end{array}%
\right) .
\end{equation*}%
Then, 
\begin{equation*}
Sh_{1}\left( N,M,t\right) =0.5<1=Sh_{1}\left( N,M,t^{\prime }\right) .
\end{equation*}%
As for \textit{reciprocal artist monotonicity}, let $N=\left\{ 1,2,3\right\}
,$ $M=\left\{ 1,2\right\} ,$ 
\begin{equation*}
t=\left( 
\begin{array}{cc}
10 & 0 \\ 
20 & 20 \\ 
20 & 20%
\end{array}%
\right) ,\text{ and }t^{\prime }=\left( 
\begin{array}{cc}
10 & 0 \\ 
20 & 10 \\ 
20 & 0%
\end{array}%
\right) .
\end{equation*}%
Then, 
\begin{equation*}
Sh_{2}\left( N,M,t\right) =0.83<1.33=Sh_{2}\left( N,M,t^{\prime }\right) .
\end{equation*}%
As for \textit{user monotonicity}, let $N=\left\{ 1,2,3\right\} ,$ $%
M=\left\{ 1\right\} ,$ $t=\left( 10,20,0\right) $ and $t^{\prime }=\left(
5,0,0\right) $. Then, 
\begin{equation*}
Sh_{1}\left( N,M,t\right) =\frac{1}{2}<1=Sh_{1}\left( N,M,t^{\prime }\right)
.
\end{equation*}%
Finally, the above implies that it does not satisfy \textit{overall
monotonicity} either (recall Remark \ref{rel mon axioms}).  
\end{proof}

\bigskip 

\bigskip 

\bigskip 

The content of Proposition \ref{axioms indices} can be summarized in the
following table. 
\begin{equation*}
\begin{tabular}{|l|l|l|l|l|}
\hline
Axiom / Index & Pro-rata & User-centric & Shapley & Egalitarian \\ \hline
Overall monotonicity & Yes & No & No & Yes \\ \hline
Artists monotonicity & Yes & Yes & Yes & Yes \\ \hline
Total artist monotonicity & Yes & No & No & Yes \\ \hline
Reciprocal artist monotonicity & Yes & No & No & Yes \\ \hline
User monotonicity & Yes & No & No & Yes \\ \hline
\end{tabular}%
\end{equation*}

\bigskip


\subsection{The pro-rata index}

We first provide two characterizations of the pro-rata index resorting to
two monotonicity axioms.

\begin{theorem}
\label{char pro-rata} An index satisfies additivity, pairwise homogeneity,

$\left( a\right) $ reciprocal artist monotonicity and anonymity on users if
and only if it is the pro-rata index.

$\left( b\right) $ and total artist monotonicity if and only if it is the
pro-rata index.
\end{theorem}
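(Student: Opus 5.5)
The plan is to verify that the pro-rata index has all the listed properties, and then to show that any index with them coincides with it. The first part is routine. Additivity, pairwise homogeneity and anonymity on users are immediate from $P_i=\sum_{j}t_{ij}$, and the two monotonicity axioms are covered by Proposition \ref{axioms indices}(a).

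For uniqueness, I would first use additivity to reduce everything to one-user problems:
\begin{equation*}
I(N,M,t)=\sum_{j\in M}I(N,\{j\},t_{.j}).
\end{equation*}
Next I would apply pairwise homogeneity inside a one-user problem with profile $x$.
\begin{itemize}
\item Taking $\lambda=0$ gives $I_i=0$ whenever $x_i=0$, so null artist holds.
\item For $i,k$ with $x_k>0$ we have $x_i=(x_i/x_k)x_k$, hence $I_i=(x_i/x_k)I_k$.
\end{itemize}
Consequently there are scalars $f_j(x)\ge 0$ with
\begin{equation*}
I(N,\{j\},x)=f_j(x)\,x ,\qquad\text{so}\qquad I_i(N,M,t)=\sum_{j\in M}f_j(t_{.j})\,t_{ij}.
\end{equation*}
The whole task is then to show that $f_j(x)$ is a positive constant $c$, possibly depending on $N$ but not on $j$ or $x$. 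That yields $I=cP$, which is the pro-rata index up to the identification with positive multiples.

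For (b), I would apply total artist monotonicity in both directions. If $T_i(t)=T_i(t')$ for every $i$, the inequalities go both ways, so $I(N,M,t)=I(N,M,t')$: the index depends only on $M$ and the vector of totals. With $|M|=2$ and profiles $x,y$, this gives identities of the form
\begin{equation*}
f_1(x)x+f_2(y)y=f_1(x')x'+f_2(y')y' \quad\text{whenever } x+y=x'+y' .
\end{equation*}
First I would swap the columns, comparing $(x,y)$ with $(y,x)$, and use linear independence of $x$ and $y$ (for $n\ge 2$; $n=1$ is trivial) to get $f_1=f_2$ on independent pairs. Then I would compare $(x,y)$ with $(x+y-e,e)$ for a suitable unit-type vector $e$, and chain such comparisons to conclude $f(x)=f(y)$ for all $x,y$. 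Dependent pairs and vectors with small support would be handled through intermediate vectors independent of both. I expect this bookkeeping, which keeps every column nonzero and avoids degenerate supports, to be the main technical point of (b).

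For (a), anonymity on users first removes the dependence of $f_j$ on $j$. Within a one-user problem I would then apply reciprocal artist monotonicity with a fixed artist $i$ having $x_i>0$ while other coordinates are raised.
\begin{itemize}
\item The inequality for $i$ gives $f(\text{new})\le f(\text{old})$.
\item If some other artist $k$ with $x_k>0$ keeps its stream unchanged, the inequality for $k$ gives $f(\text{new})\ge f(\text{old})$.
\end{itemize}
Hence raising one coordinate, while at least two other positive coordinates stay fixed, leaves $f$ unchanged. Chaining such moves makes $f$ constant on profiles with support of size at least three. The hard step is linking profiles with small support (one or two positive entries), where only a one-sided inequality is available. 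There I would use two-user problems: reciprocal artist monotonicity applied to the whole matrix, combined with additivity, lets a small-support column be compared against a large-support column in the other user. This should pin down $f$ on small supports as the same constant. If that fails, I would try to extract the reverse inequality from the behaviour of the other artists' indices across the two users.
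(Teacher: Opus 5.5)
\textbf{There is a genuine gap in (a).} It sits at the step you flag as hard, and your fallback does not fill it.

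You only invoke reciprocal artist monotonicity with the positive artist $i$ held fixed. That bounds $f$ from above, and you take the lower bound from a second positive artist among the ``others''. The missing idea is to apply the axiom a second time with a \emph{different} artist held fixed, one whose stream is unchanged and may be zero.

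Take a one-user profile $x$ with $x_{i^1}>0$. Let $x'$ raise coordinate $k\neq i^1$, and take any $i^2\notin\{k,i^1\}$.
\begin{itemize}
\item Holding $i^1$ fixed gives $f(x')x_{i^1}=I_{i^1}(x')\le I_{i^1}(x)=f(x)x_{i^1}$.
\item Holding $i^2$ fixed makes $i^1$ one of the others, so $I_{i^1}(x')\ge I_{i^1}(x)$.
\end{itemize}
Hence $f(x')=f(x)$ with only \emph{one} positive coordinate kept fixed. Any nonzero profile can then be driven to $(\alpha,\dots,\alpha)$ coordinate by coordinate, always keeping some positive coordinate fixed; this is the paper's argument.

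Support two was already covered by your own rule when $n\ge 3$: raise a zero coordinate while the two positive ones stay fixed. The real obstruction is support one. Moving to two-user problems does not change the logic. As long as the positive artist is the fixed one, its index is only bounded from above. A lower bound requires holding fixed another artist with an unchanged row, which is the same trick.

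\textbf{Why a third artist is indispensable.} For $n=2$, statement (a) is false. With two artists, the user-centric index satisfies additivity, pairwise homogeneity, anonymity on users and reciprocal artist monotonicity. The last holds because $t_{1j}/(t_{1j}+t_{2j})$ is nonincreasing and $t_{2j}/(t_{1j}+t_{2j})$ is nondecreasing in $t_{2j}$. Yet it differs from pro-rata, e.g.\ at $t=\left(\begin{array}{cc}1&0\\1&1\end{array}\right)$, where the user-centric rewards are $(1/2,3/2)$ and the pro-rata rewards are $(2/3,4/3)$. So the result implicitly needs $n\ge 3$, which the paper's proof uses through the three distinct artists $k,i^1,i^2$. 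Your hope of pinning down small supports without such an artist cannot succeed in general.

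\textbf{On (b).} Your route is different from the paper's but viable.
\begin{itemize}
\item You exploit invariance of the index under changes that preserve totals, read as a Pexider-type identity in which $f(x)x+f(y)y$ depends only on $x+y$ on two-user problems.
\item The paper instead concentrates each artist's streams in one user to obtain $I_i=w(j,T_i1^i)\,T_i$. It then runs the induction $w(a+1)(a+1)=w(a)a+w(1)$.
\end{itemize}
Your version keeps every column nonzero, a domain requirement that some of the paper's auxiliary matrices ignore. However, the chaining you describe still has to be carried out.
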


\begin{proof}
$\left( a\right) $ Berganti\~{n}os and Moreno-Ternero (2025a) prove that
pro-rata satisfies additivity and pairwise homogeneity. By Proposition \ref%
{axioms indices} pro-rata satisfies reciprocal artists monotonicity. It is
obvious that pro-rata satisfies anonymity on users.

Let $I$ be an index satisfying \textit{additivity}, \textit{pairwise
homogeneity}, \textit{reciprocal artist monotonicity} and \textit{anonymity
on users}. 
By Theorem 1 in Berganti\~{n}os and Moreno-Ternero (2025a), \textit{%
additivity} and \textit{pairwise homogeneity} imply that $I$ is a \textit{%
weighted index}. That is, there exists a function $w:\mathbb{M}\times 
\mathbb{M}\times \bigcup\limits_{N\subset \mathbb{N}}\mathbb{Z}%
_{+}^{n}\rightarrow \mathbb{R}_{+}$ such that for each $\left( N,M,t\right)
\in \mathcal{P}$ and each $i\in N,$ 
\begin{equation*}
I_{i}^{w}\left( N,M,t\right) =\sum_{j\in M}w\left( j,t_{.j}\right) t_{ij}.
\end{equation*}

We show next that \textit{reciprocal artist monotonicity} implies that it is
actually a \textit{user-weighted index}, i.e., we show that, for each $%
\left( N,M,t\right) \in \mathcal{P}$, and each $i\in N$, there exists $w:%
\mathbb{M}\rightarrow \mathbb{R}_{+}$ such that 
\begin{equation*}
I_{i}\left( N,M,t\right) =\sum_{j\in M}w\left( j\right) t_{ij}.
\end{equation*}

To do so, let $j\in M,$ $x\in \mathbb{Z}_{+}^{n},$ $k\in N,$ and $x^{k\alpha
}$ be such that $x_{k}^{k\alpha }=\alpha >x_{k}$ and $x_{i}^{k\alpha }=x_{i}$
for all $i\in N\backslash \left\{ k\right\} $. Let $k,i^{1},i^{2}$ be three
different artists with $x_{i^{1}}>0.$ By \textit{reciprocal artist
monotonicity} applied to $\left( N,\left\{ j\right\} ,x\right) ,$ $\left(
N,\left\{ j\right\} ,x^{k\alpha }\right) $ and $i^{1}$ we obtain that 
\begin{equation*}
w\left( j,x\right) x_{i^{1}}=I_{i^{1}}\left( N,\left\{ j\right\} ,x\right)
\leq I_{i^{1}}\left( N,\left\{ j\right\} ,x^{k\alpha }\right) =w\left(
j,x^{k\alpha }\right) x_{i^{1}}.
\end{equation*}%
Thus, $w\left( j,x\right) \leq w\left( j,x^{k\alpha }\right) $

By \textit{reciprocal artist monotonicity} applied to $\left( N,\left\{
j\right\} ,x\right) ,$ $\left( N,\left\{ j\right\} ,x^{k\alpha }\right) $
and $i^{2}$ we obtain that 
\begin{equation*}
w\left( j,x\right) x_{i^{1}}=I_{i^{1}}\left( N,\left\{ j\right\} ,x\right)
\geq I_{i^{1}}\left( N,\left\{ j\right\} ,x^{k\alpha }\right) =w\left(
j,x^{k\alpha }\right) x_{i^{1}}.
\end{equation*}%
Thus, $w\left( j,x\right) \geq w\left( j,x^{k\alpha }\right) .$ Hence $%
w\left( j,x\right) =w\left( j,x^{k\alpha }\right) .$

Let $\alpha >\max \left\{ x_{i}:i\in N\right\} .$ Then, 
\begin{equation*}
w\left( j,x\right) =w\left( j,x^{1\alpha }\right) =w\left( j,\left(
x^{1\alpha }\right) ^{2\alpha }\right) =...=w\left( j,\left( \left(
x^{1\alpha }\right) ^{2\alpha }\right) ^{...n\alpha }\right) =w\left(
j,\left( \alpha ,\alpha ,...,\alpha \right) \right) .
\end{equation*}

Now, given $j\in \mathbb{M}$ and $x,x^{\prime }\in \mathbb{Z}_{+}^{n}$ we
define $\alpha >\max \left\{ \max \left\{ x_{i}:i\in N\right\} ,\max \left\{
x_{i}^{\prime }:i\in N\right\} \right\} .$ Thus, 
\begin{equation*}
w\left( j,x\right) =w\left( j,\left( \alpha ,\alpha ,...,\alpha \right)
\right) =w\left( j,x^{\prime }\right)
\end{equation*}%
and hence $I$ is a \textit{user-weighted index}.

Note that, given $j\in M$ and $i\in N$, $w\left( j\right) =I_{i}\left(
N,\left\{ j\right\} ,\left( 1,...,1\right) \right) .$ Let $j^{\prime }\in M.$
By \textit{anonymity on users}, $I_{i}\left( N,\left\{ j^{\prime }\right\}
,\left( 1,...,1\right) \right) =I_{i}\left( N,\left\{ j\right\} ,\left(
1,...,1\right) \right) .$ Thus, $w\left( j\right) =w\left( j^{\prime
}\right) .$ Let us define $w=w\left( j\right) $ with $j\in M.$ Then, for
each $\left( N,M,t\right) \in \mathcal{P}\ $and each $i\in N$, 
\begin{equation*}
I_{i}\left( N,M,t\right) =\sum_{j\in M}wt_{ij}=wP\left( N,M,t\right) .
\end{equation*}

As we have identified an index with all its positive linear transformations,
the statement follows. 

$\left( b\right) $ Berganti\~{n}os and Moreno-Ternero (2025a) prove that
pro-rata satisfies additivity and pairwise homogeneity. By Proposition \ref%
{axioms indices} pro-rata satisfies total artist monotonicity.

Let $I$ be an index satisfying \textit{additivity}, \textit{pairwise
homogeneity} and \textit{total artist monotonicity}. Berganti\~{n}os and
Moreno-Ternero (2025a) prove that if an index $I$ satisfies \textit{%
additivity} and \textit{pairwise homogeneity}, then $I$ is a \textit{%
weighted index} $I^{w}$.

Let $\left( N,M,t\right) ,\left( N,M,t^{\prime }\right) \in \mathcal{P}$ be
such that $T_{i}=T_{i}^{\prime }$ for all $i\in N.$ By \textit{total artists
monotonicity}, $I\left( N,M,t\right) =I\left( N,M,t^{\prime }\right) .$

Let $\left( N,M,t\right) \in \mathcal{P}$ and $i\in N.$ We define $\left(
N,M,t^{i,1}\right) $ where 
\begin{equation*}
t_{kl}^{i,1}=\left\{ 
\begin{tabular}{ll}
$T_{i}$ & if $k=i$ and $l=1$ \\ 
0 & if $k=i$ and $l\neq 1$ \\ 
$0$ & if $k\neq i$ and $l=1$ \\ 
$t_{kl}+t_{k1}$ & if $k\neq i$ and $l=2$ \\ 
$t_{kl}$ & otherwise.%
\end{tabular}%
\right.
\end{equation*}

Let $1^{i}$ be the vector where $1_{i}^{i}=1$ and $1_{k}^{i}=0$ when $k\in
N\backslash \left\{ i\right\}$. By \textit{total artist monotonicity}, 
\begin{equation*}
I_{i}\left( N,M,t\right) =I_{i}\left( N,M,t^{i,1}\right) =w\left(
1,T_{i}1^{i}\right) T_{i}.
\end{equation*}

Similarly, we can prove that $I_{i}\left( N,M,t\right) =w\left(
j,T_{i}1^{i}\right) T_{i}$ for all $j\in M.$

Let $i,i^{\prime }\in N,$ $j,j^{\prime }\in M,$ and $x\in \mathbb{Z}_{+}.$
We define $t^{1}$, $t^{2},$ and $t^{3}$ as follows: 
\begin{eqnarray*}
t_{kl}^{1} &=&\left\{ 
\begin{tabular}{ll}
$x$ & if $k=i$ and $l=j$ \\ 
$0$ & otherwise%
\end{tabular}%
\right. , \\
t_{kl}^{2} &=&\left\{ 
\begin{tabular}{ll}
$x$ & if $k=i^{\prime }$ and $l=j^{\prime }$ \\ 
$0$ & otherwise%
\end{tabular}%
\right. ,\text{ and} \\
t_{kl}^{3} &=&\left\{ 
\begin{tabular}{ll}
$x$ & if $k\in \left\{ i,i^{\prime }\right\} $ and $l=j$ \\ 
$0$ & otherwise%
\end{tabular}%
\right. ,\text{ }
\end{eqnarray*}

Thus, $I_{i}\left( N,M,t^{1}\right) =w\left( j,x1^{i}\right) x,$ $%
I_{i^{\prime }}\left( N,M,t^{2}\right) =w\left( j^{\prime },x1^{i^{\prime
}}\right) x,$ and $I_{i}\left( N,M,t^{3}\right) =w\left( j,t_{.j}^{3}\right)
x=I_{i^{\prime }}\left( N,M,t^{3}\right) .$

By \textit{total artist monotonicity}, $I_{i}\left( N,M,t^{1}\right)
=I_{i}\left( N,M,t^{3}\right) $ and $I_{i^{\prime }}\left( N,M,t^{2}\right)
=I_{i^{\prime }}\left( N,M,t^{3}\right) .$ Then, $w\left( j,x1^{i}\right)
=w\left( j^{\prime },x1^{i^{\prime }}\right) .$

Thus, for each $x\in \mathbb{Z}_{+}$ we can define $w\left( x\right) $ as $%
w\left( j,x1^{i}\right) $ because it does not depends on $i$ or $j.$ We now
prove that $w\left( 2\right) =w\left( 1\right) .$ Let $\left( N,M,t\right) ,$
$i\in N,$ and $j,j^{\prime }\in M.$ We now define $t^{1}$ and $t^{2}$ as
follows 
\begin{equation*}
t_{kl}^{1}=\left\{ 
\begin{tabular}{ll}
$2$ & if $k=i$ and $l=j$ \\ 
$0$ & otherwise%
\end{tabular}%
\right. \text{ and }t_{kl}^{2}=\left\{ 
\begin{tabular}{ll}
$1$ & if $k=i$ and $l\in \left\{ j,j^{\prime }\right\} $ \\ 
$0$ & otherwise%
\end{tabular}%
\right. .
\end{equation*}

Thus, $I_{i}\left( N,M,t^{1}\right) =w\left( 2\right) 2$ and $I_{i}\left(
N,M,t^{2}\right) =w\left( 1\right) +w\left( 1\right) .$ By \textit{total
artists monotonicity}, $I_{i}\left( N,M,t^{1}\right) =I_{i}\left(
N,M,t^{2}\right) .$ Hence, $w\left( 2\right) =w\left( 1\right) .$

Assume now that $w\left( x\right) =w\left( 1\right) $ for all $x\leq a.$ We
prove it when $x=a+1.$ Let $\left( N,M,t\right) ,$ $i\in N,$ and $%
j,j^{\prime }\in M.$ We now define $t^{1}$ and $t^{2}$ as follows 
\begin{equation*}
t_{kl}^{1}=\left\{ 
\begin{tabular}{ll}
$a+1$ & if $k=i$ and $l=j$ \\ 
$0$ & otherwise%
\end{tabular}%
\right. \text{ and }t_{kl}^{2}=\left\{ 
\begin{tabular}{ll}
$a$ & if $k=i$ and $l=j$ \\ 
1 & if $k=i$ and $l=j^{\prime }$ \\ 
$0$ & otherwise%
\end{tabular}%
\right. .
\end{equation*}

Thus, $I_{i}\left( N,M,t^{1}\right) =w\left( a+1\right) \left( a+1\right) $
and $I_{i}\left( N,M,t^{2}\right) =w\left( a\right) a+w\left( 1\right) .$ By 
\textit{total artist monotonicity}, $I_{i}\left( N,M,t^{1}\right)
=I_{i}\left( N,M,t^{2}\right) .$ Hence, $w\left( a+1\right) \left(
a+1\right) =w\left( a\right) a+w\left( 1\right) .$ By the induction
hypothesis, $w\left( a\right) =w\left( 1\right) .$ Hence, $w\left(
a+1\right) =w\left( 1\right) .$

Thus, for each $\left( N,M,t\right) \in \mathcal{P}\ $and each $i\in N$, 
\begin{equation*}
I_{i}\left( N,M,t\right) =\sum_{j\in M}w\left( 1\right) t_{ij}=w\left(
1\right) T_{i}=w\left( 1\right) P_{i}\left( N,M,t\right) .
\end{equation*}

As we have identified each index with all its positive linear
transformations, we deduce that $I$ coincides with $P$.
\end{proof}


\bigskip 

\begin{remark}
\label{indep char pro-rata} The axioms used in Theorem \ref{char pro-rata}
are independent.

$I^{2}$ defined above satisfies all axioms except for additivity.

The egalitarian index satisfies all axioms except for pairwise homogeneity.

Let $I$ be a weighted index that it is not a user-weighted index. Then, $I$
satisfies all axioms except for reciprocal artist monotonicity and total
artist monotonicity.

Let $I^{w}$ be a weighted index where for each $(j,x)\in \mathbb{M}\times
\bigcup\limits_{N\subset \mathbb{N}}\mathbb{Z}^n$, $w\left( j,x\right) =j.$
Then, $I^{w}$ satisfies all axioms except for anonymity on users.
\end{remark}

\bigskip 

\subsection{The Shapley index}

We now provide a characterization of the Shapley index resorting to another
monotonicity axiom.

\begin{theorem}
\label{char Sh} An index satisfies additivity, symmetry, budget balance,
null artist, non-unilateral manipulability and artist monotonicity if and
only if it is the Shapley index.
\end{theorem}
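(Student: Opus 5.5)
The plan is to reduce everything to single-user problems via additivity, and then show that in each single-user problem the index depends only on the set of artists streamed. Here I read $F_i(N,M,t)$ in non-unilateral manipulability as the set of fans of $i$, i.e. the users $j$ with $t_{ij}>0$; the argument depends on this reading.

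\emph{Sufficiency of the Shapley index.} Berganti\~{n}os and Moreno-Ternero (2025b) show that the Shapley index satisfies additivity, null artist and non-unilateral manipulability. Symmetry and budget balance are immediate from the formula. Artist monotonicity is part of Proposition \ref{axioms indices}.

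\emph{Reduction to one user.} Let $I$ satisfy the six axioms. By the characterization of decomposable indices in Berganti\~{n}os and Moreno-Ternero (2026b), additivity and null artist imply $I=I^{d}$ for some $d$ with $d(i,j,x)=0$ whenever $x_i=0$. In particular, $d(i,j,x)=I_i(N,\{j\},x)$ for every single-user problem, which is well defined because $x\neq 0$. It therefore suffices to show that
\[
I_i(N,\{j\},x)=\frac{1}{|L^j|}\quad \text{for } i\in L^j=\{k:x_k>0\}.
\]
The value is $0$ for $i\notin L^j$ by null artist.

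\emph{Key step: dependence only on the support.} Fix $j$, a vector $x$, an artist $k$ with $x_k>0$, and a vector $x'$ that agrees with $x$ except that $x'_k>x_k$. The fan set of $k$ is the same in both problems. Non-unilateral manipulability then gives $I_k(N,\{j\},x)\ge I_k(N,\{j\},x')$, while artist monotonicity gives the reverse inequality, so $I_k$ is unchanged. Artist monotonicity also gives $I_{i}(N,\{j\},x')\le I_{i}(N,\{j\},x)$ for every $i\neq k$. Budget balance fixes the total at $1$, and $I_k$ did not change, so these weak inequalities must all be equalities. Hence the whole vector $I(N,\{j\},\cdot)$ is unchanged when a single positive entry is raised. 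Chaining such steps one artist at a time, any two vectors with the same support $S$ can be taken to their componentwise maximum, so $I(N,\{j\},x)$ depends only on $S=L^j$.

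\emph{Conclusion.} Take the representative vector $x=\mathbf 1_S$. Symmetry makes all artists in $S$ equal, null artist gives $0$ outside $S$, and budget balance gives the value $1/|S|$. Summing over users by additivity yields $I=Sh$; budget balance rules out rescaling, although scaling is irrelevant anyway.

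I expect the main obstacle to be the key step. Its crux is the equality for the \emph{other} artists, which needs the combination of budget balance with the one-sided inequality from artist monotonicity. The other delicate point is checking that the reading of $F_i$ as the fan set matches the paper's definition, so that non-unilateral manipulability applies to raising a positive entry while keeping the support fixed.
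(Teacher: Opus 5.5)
Your proposal is correct and follows essentially the same route as the paper. It reduces to single-user problems by additivity and combines non-unilateral manipulability with artist monotonicity to freeze the raised artist's value, while budget balance freezes everyone else's. It then pins down a canonical vector via symmetry, null artist and budget balance. The paper uses as its canonical vector the one with every positive entry raised to the maximum entry rather than $\mathbf{1}_S$, a purely cosmetic difference, and your reading of $F_i$ as the fan set matches the source.
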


\begin{proof}
Berganti\~{n}os and Moreno-Ternero (2025b) prove that the Shapley index
satisfies all the axioms in the statement but \textit{artist monotonicity}.
By Proposition \ref{axioms indices}, the Shapley index also satisfies 
\textit{artist monotonicity}.

Conversely, let $I$ be an index satisfying the axioms of the statement.

Let $\left( N,M,t\right) \in \mathcal{P}$. By \textit{additivity}, it is
enough to prove that $I$ coincides with the Shapley index at $\left(
N,\left\{ j\right\} ,t_{.j}\right)$, for all $j\in M.$

Given $j\in M$, we define $S=\{i\in N:t_{ij}=0\}$. Assume, without loss of
generality that $N\backslash S=\left\{ 1,...,n-s\right\} .$

By \textit{null artist}, for each $i\in S,$ $I_{i}\left( N,\left\{ j\right\}
,t_{.j}\right) =0$.

Let $t^{\prime }$ be such that $t_{ij}^{\prime }=0$ if $i\in S$ and $%
t_{ij}^{\prime }=\max \{t_{ij}:i\in N\}$ otherwise. By \textit{null artist},
for each $i\in S$, $I_{i}\left( N,\left\{ j\right\} ,t^{\prime }\right) =0$.
By \textit{symmetry}, $I_{i}\left( N,\left\{ j\right\} ,t^{\prime }\right)
=x $, for each $i\notin S$. By \textit{budget balance}, $x=\frac{1}{n-s}$.
Thus, $I\left( N,\left\{ j\right\} ,t^{\prime }\right) =Sh\left( N,\left\{
j\right\} ,t^{\prime }\right) .$

For each $k=1,...,n-s$ we define $x^{k}$ such that $x_{i}^{k}=\max
\{t_{ij}:i\in N\}$ when $i\leq k$ and $x_{i}^{k}=t_{ij}$ otherwise. We
define $x^{0}=t_{.j}.$

Let $k\in \left\{ 1,...,n-s\right\} .$ By \textit{artist monotonicity}, 
\begin{eqnarray*}
I_{k}\left( N,\left\{ j\right\} ,x^{k}\right) &\geq &I_{k}\left( N,\left\{
j\right\} ,x^{k-1}\right) \text{ and } \\
I_{i}\left( N,\left\{ j\right\} ,x^{k}\right) &\leq &I_{i}\left( N,\left\{
j\right\} ,x^{k-1}\right) \text{ for each }i\in N\backslash \left\{
k\right\} .
\end{eqnarray*}

By \textit{non-unilateral manipulability}, 
\begin{equation*}
I_{k}\left( N,\left\{ j\right\} ,x^{k}\right) \leq I_{k}\left( N,\left\{
j\right\} ,x^{k-1}\right) .
\end{equation*}

Thus, 
\begin{equation*}
I_{k}\left( N,\left\{ j\right\} ,x^{k}\right) =I_{k}\left( N,\left\{
j\right\} ,x^{k-1}\right) .
\end{equation*}

By \textit{budget balance}, for each $i\in N\backslash \left\{ k\right\} $ 
\begin{equation*}
I_{i}\left( N,\left\{ j\right\} ,x^{k}\right) =I_{i}\left( N,\left\{
j\right\} ,x^{k-1}\right) .
\end{equation*}

Then, $I\left( N,\left\{ j\right\} ,x^{k}\right) =I\left( N,\left\{
j\right\} ,x^{k-1}\right) .$ As $x^{0}=t_{.j}$ and $x^{n-s}=t_{.j}^{\prime }$
it follows that 
\begin{equation*}
I\left( N,\left\{ j\right\} ,t_{.j}\right) =I\left( N,\left\{ j\right\}
,t_{.j}^{\prime }\right) =Sh\left( N,\left\{ j\right\} ,t^{\prime }\right)
=Sh\left( N,\left\{ j\right\} ,t\right),
\end{equation*}
which concludes the proof.
\end{proof}


\bigskip 

\begin{remark}
\label{indep char Sh} The axioms used in Theorem \ref{char Sh} are
independent.

We consider the index that divides $m$ equally among non-null artists.
Formally, given a problem $\left( N,M,t\right) $ we define $N^{nn}$ as the
set of non null artists. Namely, $N^{nn}=\left\{ i\in N:T_{i}>0\right\} .$
Let $I$ be such that $I_{i}\left( N,M,t\right) =\frac{m}{\left\vert
N^{nn}\right\vert }$ if $i\in N^{nn}$ and $I_{i}\left( N,M,t\right) =0$
otherwise. This index satisfies all axioms except for additivity.

$I^{w}$ defined above satisfies all axioms except for symmetry.

The index that assigns 0 to each artist in each problem satisfies all axioms
except for budget balance.

The egalitarian index satisfies all axioms except for null artist.

The user-centric index satisfies all axioms except for non-unilateral
manipulability

Consider the index $U^{-1}$ defined as follows. For each $j\in M$ 
\begin{equation*}
U_{i}^{-1}\left( N,\left\{ j\right\} ,t\right) =\left\{ 
\begin{tabular}{cc}
0 & if $t_{ij}=0$ \\ 
$\frac{\frac{1}{t_{ij}}}{\sum\limits_{i^{\prime }:t_{i^{\prime }j}>0}\frac{1%
}{t_{i^{\prime }j}}}$ & if $t_{i^{\prime }j}>0.$%
\end{tabular}%
\right. 
\end{equation*}%
We extend $U^{-1}$ to $\left( N,M,t\right) $ by additivity. Then $U^{-1}$
satisfies all axioms except for artist monotonicity.
\end{remark}

\bigskip 

\bigskip 

\bigskip 

\subsection{The egalitarian index}

We conclude showing how some of the monotonicity axioms are somewhat strong
as they characterize the egalitarian index, when combined with two basic
axioms.


\begin{theorem}
\label{char egalitarian} An index satisfies symmetry, budget balance, and

$\left( a\right) $ monotonicity if and only if it is the egalitarian index.

$\left( b\right) $  user monotonicity if and only if it is the egalitarian
index.

$\left( c\right) $ reciprocal artist monotonicity if and only if it is the
egalitarian index.
\end{theorem}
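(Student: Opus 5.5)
The plan is to use Proposition \ref{axioms indices}(a) for one direction: the egalitarian index satisfies every monotonicity axiom. It is also obviously symmetric and budget balanced, since $\sum_{i\in N}E_i=m$. For the converse, the same device works in all three parts. Fix $(N,M,t)$ and an integer $\alpha\geq\max_{i,j}t_{ij}$ (with $\alpha\geq 1$), and let $t^{\alpha}$ be the matrix with all entries equal to $\alpha$. In $(N,M,t^{\alpha})$ all rows coincide, so \emph{symmetry} and \emph{budget balance} give $I_i(N,M,t^{\alpha})=\frac{m}{n}$ for every $i$. It remains to connect $t$ to $t^{\alpha}$ through steps along which the relevant monotonicity axiom yields one-sided inequalities for \emph{all} artists simultaneously. \emph{Budget balance} then turns each step into equalities, because if $x\geq y$ componentwise and $\sum x=\sum y$, then $x=y$.

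For (a), $t^{\alpha}\geq t$. \emph{Overall monotonicity} gives $I_i(N,M,t^{\alpha})\geq I_i(N,M,t)$ for every $i$, and budget balance forces equality, so $I(N,M,t)=E(N,M,t)$. For (b), I will move one user at a time. Order $M=\{1,\dots,m\}$ and let $t^{(j)}$ be obtained from $t$ by replacing columns $1,\dots,j$ with the constant column $\alpha$, so that $t^{(0)}=t$ and $t^{(m)}=t^{\alpha}$. Consecutive matrices differ only in column $j$, and $t^{(j)}_{.j}\geq t^{(j-1)}_{.j}$. Crucially, $L^{j}(N,M,t^{(j)})=N$, since every entry of column $j$ in the larger matrix is $\alpha>0$. \emph{User monotonicity} therefore gives $I_i(t^{(j)})\geq I_i(t^{(j-1)})$ for all $i\in N$. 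Budget balance makes this an equality, and chaining over $j$ gives $I(N,M,t)=I(N,M,t^{\alpha})=E(N,M,t)$. All intermediate problems lie in $\mathcal P$, because each column sum stays positive.

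For (c), I will instead move one artist at a time. I expect this to be the main obstacle, because a single application of \emph{reciprocal artist monotonicity} pushes artist $i$ and the others in opposite directions, so budget balance alone does not close the argument. Let $s^{(k)}$ be obtained from $t$ by replacing rows $1,\dots,k$ with the constant row $\alpha$. Going from $s^{(k-1)}$ to $s^{(k)}$, only row $k$ increases. For any $i\neq k$, the axiom applied with base artist $i$ (row $i$ fixed, the other rows weakly larger) gives $I_i(s^{(k)})\leq I_i(s^{(k-1)})$ and $I_l(s^{(k)})\geq I_l(s^{(k-1)})$ for all $l\neq i$. When $n\geq 3$, pick two distinct artists $i^{1},i^{2}\neq k$. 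Using base $i^{1}$ gives $I_{i^{2}}(s^{(k)})\geq I_{i^{2}}(s^{(k-1)})$, and using base $i^{2}$ gives the reverse inequality. Hence every artist other than $k$ keeps the same value, and budget balance then fixes $I_k$ as well. This mirrors the trick in the proof of Theorem \ref{char pro-rata}(a). Chaining over $k=1,\dots,n$ yields $I(N,M,t)=I(N,M,t^{\alpha})=E(N,M,t)$.

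The case $n=1$ is trivial by budget balance. The case $n=2$ must be treated separately, and in fact I expect (c) to fail there. The pro-rata reward $R^{P}$ restricted to two artists is symmetric and budget balanced. It also satisfies reciprocal artist monotonicity, since raising the other artist's streams lowers one's share and raises theirs. So the statement in (c) has to be read for $|N|\geq 3$, or the $n=2$ problems must be excluded. I would add this caveat explicitly when writing the proof.
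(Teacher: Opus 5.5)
Your proof is correct and follows essentially the paper's route. You push $t$ up to a constant matrix step by step, use the monotonicity axiom together with budget balance to show the index is unchanged at each step, and finish with symmetry. The one-shot argument for (a) and the row-by-row steps in (c), where the paper increases one entry at a time, are only cosmetic differences.

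Your $n=2$ caveat is also right, and it applies to the paper's own proof. Its step in (c) silently needs two distinct artists $i\neq i'$ in $N\setminus\{i^{k'}\}$, which requires $n\geq 3$. Your counterexample works once completed: take the index equal to $R^{P}$ on two-artist problems and to $E$ on all other problems. It satisfies symmetry, budget balance and reciprocal artist monotonicity, because each axiom only compares problems with the same $N$, yet it is not egalitarian. So (c) is false as stated on two-artist problems.
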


\bigskip

\begin{proof}
It is straightforward to show that the egalitarian index satisfies all the
axioms in the statements. Conversely, let $I$ be an index satisfying \textit{%
symmetry} and \textit{budget balance}. To show statement (b) assume that it
also satisfies \textit{user monotonicity} (by Remark \ref{rel mon axioms},
this proof would also be valid to prove statement (a)).

Let $\left( N,M,t\right) \in \mathcal{P}$. We define $a>\max \left\{
t_{ij}:i\in N,j\in M\right\} .$

Assume that $M=\left\{ j^{1},...,j^{m}\right\} .$ For each $k=0,1,...,m$ we
define $t^{k}$ as follows, 
\begin{equation*}
t_{ij^{l}}^{k}=\left\{ 
\begin{tabular}{cc}
$a$ & if $l\leq k$ \\ 
$t_{ij^{l}}$ & otherwise.%
\end{tabular}%
\right.
\end{equation*}

Notice that $t=t^{0}\leq t^{1}\leq ...\leq t^{m}$ and $t_{ij}^{m}=a$ for
each $i\in N$ and $j\in M.$ Besides, for each $k=1,...,m$ we have that $%
t_{.j^{k}}^{k}\geq t_{.j^{k}}^{k},$ $t_{.j}^{k}=t_{.j}^{k}$ for each $j\in
M\backslash \left\{ j^{k}\right\} ,$ and $L^{j^{k}}\left(
N,M,t^{j^{k}}\right) =N.$ By \textit{user monotonicity}, for each $i\in N,$ $%
I_{i}\left( N,M,t^{j^{k}}\right) \geq I_{i}\left( N,M,t^{j^{k-1}}\right) .$
Then, for each $i\in N,$ $I_{i}\left( N,M,t^{j^{m}}\right) \geq I_{i}\left(
N,M,t\right) .$

By \textit{budget balance}, for each $i\in N,$ $I_{i}\left(
N,M,t^{j^{m}}\right) =I_{i}\left( N,M,t\right) .$

By \textit{symmetry}, for each $i\in N,$ $I_{i}\left( N,M,t^{j^{m}}\right) =%
\frac{m}{n}.$ Thus, for each $i\in N,$ 
\begin{equation*}
I_{i}\left( N,M,t\right) =\frac{m}{n}=E_{i}\left( N,M,t\right),
\end{equation*}
as desired.

We now move to statement $\left( c\right)$. 
That is, assume now that $I$ satisfies \textit{reciprocal artist monotonicity%
} instead.

Let $\left( N,M,t\right) \in \mathcal{P}$. We define $a$ as above. We order
the pairs $\left\{ \left( i,j\right) \right\} _{i\in N,j\in M}$
lexicographically. Namely, 
\begin{equation*}
\left( 1,1\right) \leq _{L}\left( 1,2\right) \leq _{L}...\leq _{L}\left(
1,m\right) \leq _{L}\left( 2,1\right) \leq _{L}\left( 2,2\right) ....
\end{equation*}

For each $k=1,...,nm$ we define $\left( i^{k},j^{k}\right) $ as the
corresponding pair in the previous order. Besides, we define $t^{k}$ as
follows, 
\begin{equation*}
t_{ij}^{k}=\left\{ 
\begin{tabular}{cc}
$a$ & if $\left( i,j\right) \leq _{L}\left( i^{k},j^{k}\right) $ \\ 
$t_{ij}$ & otherwise.%
\end{tabular}%
\right.
\end{equation*}

Notice that $t=t^{0}\leq t^{1}\leq ...\leq t^{nm}$ and $t_{ij}^{nm}=a$ for
each $i\in N$ and each $j\in M$.

By \textit{symmetry}, for each $i,i^{\prime }\in N,$ $I_{i}\left(
N,M,t^{nm}\right) =I_{i^{\prime }}\left( N,M,t^{nm}\right) .$ By \textit{%
budget balance}, for each $i\in N,$ $I_{i}\left( N,M,t^{nm}\right) =\frac{m}{%
n}=E_{i}\left( N,M,t^{nm}\right) .$

We assume that for each $k=k^{\prime },...,nm$, $I\left( N,M,t^{k}\right)
=E\left( N,M,t^{k}\right) .$ We now prove that $I\left( N,M,t^{k^{\prime
}-1}\right) =E\left( N,M,t^{k^{\prime }-1}\right) .$ Notice that $%
t^{k^{\prime }}$ is obtained from $t^{k^{\prime }-1}$ by increasing entry $%
\left( i^{k^{\prime }},j^{k^{\prime }}\right) $ from $t_{i^{k^{\prime
}}j^{k^{\prime }}}$ to $a.$

Let $i,i^{\prime }\in N\setminus \{i^{k^{\prime }}\}$. By \textit{reciprocal
artists monotonicity} applied to $t^{k^{\prime }-1},$ $t^{k^{\prime }},$ and 
$i^{\prime },$ 
\begin{equation*}
I_{i}\left( N,M,t^{k^{\prime }}\right) \geq I_{i}\left( N,M,t^{k^{\prime
}-1}\right) \text{.}
\end{equation*}

By \textit{reciprocal artist monotonicity} applied to $t^{k^{\prime }-1},$ $%
t^{k^{\prime }},$ and $i,$ 
\begin{equation*}
I_{i}\left( N,M,t^{k^{\prime }}\right) \leq I_{i}\left( N,M,t^{k^{\prime
}-1}\right) \text{.}
\end{equation*}

Then, 
\begin{equation*}
I_{i}\left( N,M,t^{k^{\prime }-1}\right) =I_{i}\left( N,M,t^{k^{\prime
}}\right) =E_{i}\left( N,M,t^{k^{\prime }}\right) =\frac{m}{n}.
\end{equation*}

As $I_{i}\left( N,M,t^{k^{\prime }-1}\right) =\frac{1}{n}$ for each $i\in
N\setminus \{i^{k^{\prime }}\}$, by \textit{budget balance}, 
\begin{equation*}
I_{i^{k^{\prime }}}\left( N,\left\{ j\right\} ,t\right) =\frac{m}{n}.
\end{equation*}

Thus, $I\left( N,M,t^{k^{\prime }-1}\right) =E\left( N,M,t^{k^{\prime
}-1}\right) .$

Consequently, $I\left( N,M,t^{0}\right) =E\left( N,M,t^{0}\right)$. As $%
t^{0}=t$, we conclude that $I\equiv E$, as desired.
\end{proof}


\bigskip 

\begin{remark}
\label{indep char egal} The axioms used in Theorem \ref{char egalitarian}
are independent.

Let $w=\left( w_{i}\right) _{i\in \mathbb{N}}$ be such that $w_{i}>0$ for
all $i\in \mathbb{N}$. We define the index $R^{w}$ as follows: for each
problem $\left( N,M,t\right) \in \mathcal{P}$ and each $i\in N$, 
\begin{equation*}
R_{i}^{w}\left( N,M,t\right) =\frac{w_{i}}{\sum\limits_{j\in N}w_{j}}m.
\end{equation*}

$R^{w}$ satisfies all axioms except for symmetry.

Pro-rata satisfies all axioms except for budget balance.

User-centric satisfies all axioms except for overall monotonicity/user
monotonicity/reciprocal artist monotonicity.
\end{remark}

\bigskip 

\bigskip 

\section{Characterizations of families of indices}

We complement our analysis in this section providing a final result that
gathers the effect monotonicity properties have on the family of
decomposable indices (the most general family in the literature). This
allows to uncover the structure of the space of indices, as well as the
effect of monotonicity axioms on this space.

\begin{theorem}
\label{char fam add+na} An index satisfies additivity, null artist and 

$\left( a\right) $ overall monotonicity or user monotonicity if and only if
it is a decomposable index satisfying the following condition: for each $%
i\in N\subset \mathbb{N}$, each $j\in \mathbb{M}$ and each pair $x,x^{\prime
}\in \mathbb{Z}_{+}^{n}$ 
such that $x\leq x^{\prime }$, $d\left( i,j,x\right) \leq d\left(
i,j,x^{\prime }\right) $.

$\left( b\right) $ artist monotonicity if and only if it is a decomposable
index satisfying the following condition: for each $i\in N\subset \mathbb{N}$%
, each $j\in \mathbb{M}$, each $x\in \mathbb{Z}_{+}^{n}$, and each $%
x_{i}^{\prime }\geq x_{i}$, 
$d\left( i,j,(x_{i}^{\prime },x_{-i})\right) \geq d\left(
i,j,(x_{i},x_{-i})\right) $ and $d\left( k,j,(x_{i},x_{-i})\right) \leq
d\left( k,j,(x_{i},x_{-i})\right) $, for each $k\in N\setminus \{i\}$.

$\left( c\right) $ total artist monotonicity if and only if it is a
decomposable index satisfying the following condition: for each $i\in
N\subset \mathbb{N}$, each $j\in M\subset \mathbb{M}$, each $t=\left(
t_{ij}\right) _{i\in N,j\in M},$ $t^{\prime }=\left( t_{ij}^{\prime }\right)
_{i\in N,j\in M}$ such that $T_{i}\geq T_{i}^{\prime }$ and $%
T_{k}=T_{k}^{\prime }$ for all $k\in N\backslash \left\{ i\right\} $, $%
\sum\limits_{j\in M}d\left( i,j,t_{.j}\right) \geq \sum\limits_{j\in
M}d\left( i,j,t_{.j}^{\prime }\right) $ and $\sum\limits_{j\in M}d\left(
k,j,t_{.j}\right) \geq \sum\limits_{j\in M}d\left( k,j,t_{.j}^{\prime
}\right) $ for each $k\in N\setminus \{i\}$.

$\left( d\right) $ reciprocal artist monotonicity if and only if it is a
decomposable index satisfying the following condition: for each $i\in
N\subset \mathbb{N}$, each $j\in \mathbb{M}$, each $x\in \mathbb{Z}_{+}^{n}$%
, and each $x_{-i}^{\prime }\geq x_{-i}$, 
$d\left( i,j,(x_{i},x_{-i})\right) \geq d\left( i,j,(x_{i},x_{-i}^{\prime
})\right) $ and $d\left( k,j,(x_{i},x_{-i})\right) \leq d\left(
k,j,(x_{i},x_{-i}^{\prime })\right) $, for each $k\in N\setminus \{i\}$.


\end{theorem}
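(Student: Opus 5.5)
The plan is to reduce everything to single-user problems through the characterization of Bergantiños and Moreno-Ternero (2026b): additivity and null artist hold if and only if $I=I^{d}$ for some decomposable $d$. In each part we therefore fix a decomposable index $I^{d}$ and show that the monotonicity axiom is equivalent to the stated condition on $d$. The key observation is that, for a one-user problem $(N,\{j\},x)$, we have $I_i(N,\{j\},x)=d(i,j,x)$. Hence any condition on $d$ is exactly the axiom restricted to one-user problems.

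\emph{Necessity.} We apply the axiom to pairs of one-user problems $(N,\{j\},x)$ and $(N,\{j\},x')$.
\begin{itemize}
\item In (a), overall monotonicity and user monotonicity coincide on one-user problems whenever $x_i>0$, since then $i\in L^{j}$. When $x_i=0$ we have $d(i,j,x)=0\le d(i,j,x')$ trivially, because $d\geq 0$. Using Remark \ref{rel mon axioms}(a), it suffices to derive the condition from user monotonicity.
\item Part (b) is read off directly from artist monotonicity with $t=(x_i',x_{-i})$ and $t'=x$.
\item Part (d) is read off directly from reciprocal artist monotonicity with $t=(x_i,x'_{-i})$ and $t'=x$. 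The inequalities in the statement of (d), and the second inequality in (b), should be read in the direction dictated by the axioms: importance of others weakly increases in (d) and weakly decreases in (b). I will state the conditions in that corrected orientation.
\item For (c), the condition is itself a statement about sums over users. Necessity is then immediate: applying the axiom to $(N,M,t),(N,M,t')$ and expanding $I^{d}$ gives exactly the displayed inequalities. The inequality for $k\neq i$ is again oriented as "$\le$".
\end{itemize}

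\emph{Sufficiency.} For (a), (b) and (d), the axioms' hypotheses compare matrices $t\ge t'$ (or a row-restricted version of this) in which every column satisfies the same relation. We apply the one-user inequality column by column and sum over $j\in M$ via $I_i=\sum_j d(i,j,t_{.j})$.

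In (a), under overall monotonicity every column satisfies $t_{.j}\ge t'_{.j}$. Under user monotonicity, only column $j$ changes and the others coincide, so the remaining terms cancel. Summing gives both axioms, and the restriction to $i\in L^{j}$ is harmless.

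In (b), only row $i$ changes, and in each column this is of the form $(x_i',x_{-i})$ versus $(x_i,x_{-i})$.

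In (d), only rows other than $i$ change, and columnwise this has the form $(x_i,x'_{-i})$ versus $(x_i,x_{-i})$.

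Part (c) is tautological once $I=I^{d}$ is written out.

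\emph{Technical care.} The main thing to watch is the domain. The conditions quantify over $x\in\mathbb{Z}_+^{n}$, but a column may contain zeros, and each column $x$ must have positive sum to be a legitimate problem. For necessity, we take the quantifier to range over nonnegative integer vectors with positive sum; for sufficiency we need exactly such vectors. I would state this convention explicitly.

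The only step that is not bookkeeping is the case $x_i=0$ in (a) under user monotonicity, where $i\notin L^{j}$. It is handled by null artist and $d\ge 0$, as noted above.
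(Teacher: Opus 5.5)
Your proof is correct and follows the paper's route: you invoke the decomposable-index characterization, read each condition off one-user problems for necessity, and sum columnwise for sufficiency, with (c) being immediate. Your direct derivation of (a) from user monotonicity, handling $x_i=0$ via $d(i,j,x)=0\le d(i,j,x')$, streamlines the paper's preliminary step showing that user and overall monotonicity are equivalent under additivity and null artist. Your domain convention is warranted: you restrict to nonnegative columns with positive sum, which is needed because the paper writes $\mathbb{Z}_{+}^{n}$ but applies the condition to columns containing zeros. Your corrected orientations in (b) and (c) are also warranted. However, (d) as stated already has the orientation dictated by reciprocal artist monotonicity, so no correction is needed there.
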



\begin{proof}
As mentioned above, each decomposable index $I^{d}$ satisfies \textit{%
additivity} and \textit{null artist}. Thus, for the first direction of each
implication, we shall simply focus on its corresponding monotonicity axiom.

$\left( a\right)$ For this statement, we show first that, 
in the presence of \textit{additivity} and \textit{null artist}, \textit{%
user monotonicity} and \textit{overall monotonicity} are equivalent axioms.
In general, if an index satisfies \textit{overall monotonicity}, then it
also satisfies \textit{user monotonicity}. The converse also holds in the
presence of \textit{additivity} and \textit{null artist}, as we show next.
Let $\left( N,M,t\right) ,\left( N,M,t^{\prime }\right) \in \mathcal{P}$ be
such that $t\geq t^{\prime }$ and $i\in N$. By \textit{additivity}, it is
enough to show that for each $j\in M$, $I_{i}\left( N,\left\{ j\right\}
,t_{.j}\right) \geq I_{i}\left( N,\left\{ j\right\} ,t_{.j}^{\prime }\right)
.$ Let $j\in M.$ If $t_{.j}=t_{.j}^{\prime },$ then $I_{i}\left( N,\left\{
j\right\} ,t_{.j}\right) =I_{i}\left( N,\left\{ j\right\} ,t_{.j}^{\prime
}\right) .$ Assume that $t_{.j}\neq t_{.j}^{\prime }.$ By \textit{user
monotonicity}, $I_{i}\left( N,\left\{ j\right\} ,t_{.j}\right) \geq
I_{i}\left( N,\left\{ j\right\} ,t_{.j}^{\prime }\right) $ for each $i\in
L^{j}\left( N,\left\{ j\right\} ,t_{.j}\right) .$ Let $i\in N\backslash
L^{j}\left( N,\left\{ j\right\} ,t_{.j}\right) .$ Then, $t_{ij}=0=t_{ij}^{%
\prime }$. Thus, by \textit{null artist}, 
$I_{i}\left( N,\left\{ j\right\} ,t_{.j}\right) =0=I_{i}\left( N,\left\{
j\right\} ,t_{.j}^{\prime }\right)$. 
Altogether, $I_{i}\left( N,\left\{ j\right\} ,t_{.j}\right) \geq I_{i}\left(
N,\left\{ j\right\} ,t_{.j}^{\prime }\right)$, as desired.

We now show that the condition in the statement implies \textit{overall
monotonicity}. That is, let $\left(N,M,t\right) ,\left( N,M,t^{\prime
}\right) \in \mathcal{P}$ be such that $t\geq t^{\prime }$. Then, for each $%
j\in M$, $t_{.j}\ge t^{\prime }_{.j}$. Thus, by the condition in the
statement, for each $i\in N$, and each $j\in M$, $d(i,j,t_{.j})\geq
d(i,j,t_{.j}^{\prime })$. Consequently, 
\begin{equation*}
I_{i}^{d}\left( N,M,t\right) =\sum_{j\in M}d(i,j,t_{.j})\geq \sum_{j\in
M}d(i,j,t_{.j}^{\prime })=I_{i}^{d}\left( N,M,t^{\prime}\right) .
\end{equation*}%
Thus, $I^{d}$ satisfies \textit{overall monotonicity}.

Conversely, let $I$ be an index satisfying \textit{additivity}, \textit{null
artist} and \textit{overall monotonicity}. Then, as mentioned above, $I$ is
a decomposable index $I^{d}$. 
It just remains to show that $d$ satisfies the condition in the statement.
Let $i\in N\subset \mathbb{N}$, $j\in \mathbb{M}$ and $x,x^{\prime }\in 
\mathbb{Z}_{+}^{n}$ be such that $x\leq x^{\prime }$. By monotonicity, 
\begin{equation*}
d(i,j,x)=I_{i}^{d}\left( N,\{j\},x\right) \leq I_{i}^{d}\left(
N,\{j\},x^{\prime }\right) =d(i,j,x^{\prime }),
\end{equation*}%
as desired.

$\left( b\right) $ 
Let $\left( N,M,t\right) ,\left( N,M,t^{\prime }\right) \in \mathcal{P}$ and 
$i\in N$ be such that $t_{i.}\geq t_{i.}^{\prime }$ and $t_{k.}=t_{k.}^{%
\prime }$ for all $k\in N\backslash \left\{ i\right\} $. Let $j\in M$. By
the condition in the statement, 
\begin{equation*}
I_{i}^{d}\left( N,\left\{ j\right\} ,t\right) =d(i,j,t_{.j})\geq
d(i,j,t_{.j}^{\prime })=I_{i}^{d}\left( N,\left\{ j\right\} ,t^{\prime
}\right) .
\end{equation*}

Furthermore, for each $k\in N\backslash \left\{ i\right\}$, 
\begin{eqnarray*}
I_{k}^{d}\left( N,\left\{ j\right\} ,t\right) &=&d(k,j,t_{.j})\leq
d\left(k,j,t_{.j}^{\prime }\right)=I_{k}^{d}\left( N,\left\{
j\right\},t^{\prime }\right) \text{.}
\end{eqnarray*}

By \textit{additivity}, 
\begin{equation*}
I_{i}^{d}\left( N,M,t\right) =\sum_{j\in M}I_{i}^{d}\left( N,\left\{
j\right\} ,t_{.j}\right) \geq \sum_{j\in M}I_{i}^{d}\left(
N,\{j\},t_{.j}^{\prime }\right) =I_{i}^{d}\left( N,M,t^{\prime }\right) ,
\end{equation*}%
and for each $k\in N\backslash \left\{ i\right\} ,$ 
\begin{equation*}
I_{k}^{d}\left( N,M,t\right) =\sum_{j\in M}I_{k}^{d}\left( N,\left\{
j\right\} ,t_{.j}\right) \leq \sum_{j\in M}I_{k}^{d}\left(
N,\{j\},t_{.j}^{\prime }\right) =I_{k}^{d}\left( N,M,t^{\prime }\right) .
\end{equation*}%
Thus, $I^{d}$ satisfies \textit{artist monotonicity}.

Conversely, let $I$ be an index satisfying the three axioms. Then, as
mentioned above, $I$ is a decomposable index $I^{d}$. 
It just remains to show that $d$ satisfies the condition.

Let $i\in N\subset\mathbb{N}$, $j\in \mathbb{M}$, $x\in \mathbb{Z}_{+}^{n}$,
and $x^{\prime }_i\ge x_i$, 
Then, 
by \textit{artist monotonicity}, 
\begin{equation*}
d(i,j,(x^{\prime }_{i},x_{-i}))=I_{i}^{d}\left( N,\{j\},(x^{\prime
}_{i},x_{-i})\right) \geq I_{i}^{d}\left( N,\{j\},(x_{i},x_{-i})\right)
=d(i,j,(x_{i},x_{-i})).
\end{equation*}

Similarly, let $k\in N\backslash \left\{ i\right\} $. By \textit{artist
monotonicity}, 
\begin{equation*}
d(k,j,(x_{i}^{\prime },x_{-i}))=I_{k}^{d}\left( N,\{j\},(x_{i}^{\prime
},x_{-i})\right) \leq I_{k}^{d}\left( N,\{j\},(x_{i},x_{-i})\right)
=d(k,j,(x_{i},x_{-i})),
\end{equation*}%
as desired.

$\left( c\right)$ Let $\left( N,M,t\right) ,\left( N,M,t^{\prime }\right)
\in \mathcal{P}$ and $i\in N$ be such that $T_{i}\geq T_{i}^{\prime }$ and $%
T_{k}=T_{k}^{\prime }$ for all $k\in N\backslash \left\{ i\right\} $. By 
\textit{additivity}, 
\begin{eqnarray*}
I_{i}^{d}\left( N,M,t\right) &=&\sum_{j\in M}I_{i}^{d}\left( N,\left\{
j\right\} ,t_{.j}\right) =\sum_{j\in M}d\left( i,j,t_{.j}\right) \geq
\sum_{j\in M}d\left( i,j,t_{.j}^{\prime }\right) \\
&=&\sum_{j\in M}I_{i}^{d}\left( N,\{j\},t_{.j}^{\prime }\right)
=I_{i}^{d}\left( N,M,t^{\prime }\right) ,
\end{eqnarray*}%
and for each $k\in N\backslash \left\{ i\right\} ,$ 
\begin{eqnarray*}
I_{k}^{d}\left( N,M,t\right) &=&\sum_{j\in M}I_{k}^{d}\left( N,\left\{
j\right\} ,t_{.j}\right) =\sum_{j\in M}d\left( k,j,t_{.j}\right) \leq
\sum_{j\in M}d\left( k,j,t_{.j}^{\prime }\right) \\
&=&\sum_{j\in M}I_{k}^{d}\left( N,\{j\},t_{.j}^{\prime }\right)
=I_{k}^{d}\left( N,M,t^{\prime }\right) .
\end{eqnarray*}%
Thus, $I^{d}$ satisfies \textit{total artist monotonicity}.

Conversely, let $I$ be an index satisfying the three axioms. Then, as
mentioned above, $I$ is a decomposable index $I^{d}$. 
It just remains to show that $d$ satisfies the condition in the statement.
Let $i\in N\subset \mathbb{N}$, $j\in M\subset \mathbb{M}$, $t=\left(
t_{ij}\right) _{i\in N,j\in M},$ $t^{\prime }=\left( t_{ij}^{\prime }\right)
_{i\in N,j\in M}$ such that $T_{i}\geq T_{i}^{\prime }$ and $%
T_{k}=T_{k}^{\prime }$ for all $k\in N\backslash \left\{ i\right\} $, Then, 
by \textit{total artist monotonicity}, 
\begin{eqnarray*}
\sum_{j\in M}d\left( i,j,t_{.j}\right) &=&\sum_{j\in M}I_{i}^{d}\left(
N,\left\{ j\right\} ,t_{.j}\right) =I_{i}^{d}\left( N,M,t\right) \geq
I_{i}^{d}\left( N,M,t^{\prime }\right) \\
&=&\sum_{j\in M}I_{i}^{d}\left( N,\{j\},t_{.j}^{\prime }\right) =\sum_{j\in
M}d\left( i,j,t_{.j}^{\prime }\right) .
\end{eqnarray*}

Similarly, let $k\in N\backslash \left\{ i\right\} $. Then, 
\begin{eqnarray*}
\sum_{j\in M}d\left( k,j,t_{.j}\right) &=&\sum_{j\in M}I_{k}^{d}\left(
N,\left\{ j\right\} ,t_{.j}\right) =I_{k}^{d}\left( N,M,t\right) \leq
I_{k}^{d}\left( N,M,t^{\prime }\right) \\
&=&\sum_{j\in M}I_{k}^{d}\left( N,\{j\},t_{.j}^{\prime }\right) =\sum_{j\in
M}d\left( k,j,t_{.j}^{\prime }\right) ,
\end{eqnarray*}%
as desired.

$\left( d\right) $ 
Let $\left( N,M,t\right) ,\left( N,M,t^{\prime }\right) \in \mathcal{P}$ and 
$i\in N$ be such that $t_{i.}=t_{i.}^{\prime }$ and $t_{k.}\geq
t_{k.}^{\prime }$ for all $k\in N\backslash \left\{ i\right\} $. Let $j\in M$%
. Then, 
\begin{equation*}
I_{i}^{d}\left( N,\left\{ j\right\} ,t\right) =d(i,j,t_{.j})\leq
d(i,j,t_{.j}^{\prime })=I_{i}^{d}\left( N,\left\{ j\right\} ,t^{\prime
}\right) .
\end{equation*}

Now, given $k\in N\backslash \left\{ i\right\}$, 
\begin{equation*}
I_{k}^{d}\left( N,\left\{ j\right\} ,t\right) =d(k,j,t_{.j})\geq
d\left(k,j,t_{.j}^{\prime }\right)=I_{k}^{d}\left( N,\left\{
j\right\},t^{\prime }\right) \text{.}
\end{equation*}

By \textit{additivity}, 
\begin{equation*}
I_{i}^{d}\left( N,M,t\right) =\sum_{j\in M}I_{i}^{d}\left( N,\left\{
j\right\} ,t_{.j}\right) \leq \sum_{j\in M}I_{i}^{d}\left(
N,\{j\},t_{.j}^{\prime }\right) =I_{i}^{d}\left( N,M,t^{\prime }\right) ,
\end{equation*}%
and for each $k\in N\backslash \left\{ i\right\} ,$ 
\begin{equation*}
I_{k}^{d}\left( N,M,t\right) =\sum_{j\in M}I_{k}^{d}\left( N,\left\{
j\right\} ,t_{.j}\right) \geq \sum_{j\in M}I_{k}^{d}\left(
N,\{j\},t_{.j}^{\prime }\right) =I_{k}^{d}\left( N,M,t^{\prime }\right) .
\end{equation*}%
Thus, $I^{d}$ satisfies \textit{reciprocal artist monotonicity}.

Conversely, let $I$ be an index satisfying the three axioms. Then, as
mentioned above, $I$ is a decomposable index $I^{d}$. 
It just remains to show that $d$ satisfies the condition in the statement.
Let $i\in N\subset\mathbb{N}$, $j\in \mathbb{M}$, $x\in \mathbb{Z}_{+}^{n}$,
and $x^{\prime }_{-i}\ge x_{-i}$, 
Then, 
by \textit{reciprocal artist monotonicity}, 
\begin{equation*}
d(i,j,(x_{i},x_{-i}))=I_{i}^{d}\left( N,\{j\},(x_{i},x_{-i})\right) \geq
I_{i}^{d}\left( N,\{j\},(x_{i},x^{\prime }_{-i})\right)
=d(i,j,(x_{i},x^{\prime }_{-i})).
\end{equation*}

Similarly, let $k\in N\backslash \left\{ i\right\} $. Then, 
\begin{equation*}
d(k,j,(x_{i}^{\prime },x_{-i}))=I_{k}^{d}\left( N,\{j\},(x_{i}^{\prime
},x_{-i})\right) \leq I_{k}^{d}\left( N,\{j\},(x_{i},x_{-i})\right)
=d(k,j,(x_{i},x_{-i})),
\end{equation*}%
as desired.
\end{proof}


\section{Conclusions}

We have considered several monotonicity axioms for the problem of sharing
the revenues raised from subscriptions to music platforms among artists that
upload contents to such a platform. As mentioned earlier, monotonicity
axioms turned out to be very strong in some domains of problems. 
In our setting, 
they have far-reaching implications. 
For instance, when combined with two basic axioms (\textit{additivity} and 
\textit{null artist}), we obtain several characterizations (within the broad
family of decomposable indices). Furthermore, three of the monotonicity
axioms we consider characterize the (unappealing) egalitarian index, when
combined with two other basic axioms (\textit{symmetry} and \textit{budget
balance}). More interestingly, we have obtained characterizations for two
focal indices within the literature (the \textit{pro-rata} and \textit{%
Shapley} indices), resorting to some of those monotonicity axioms.

In contrast, another domain in which monotonicity axioms are rather weak is
that of claims problems introduced by O'Neill (1982), and surveyed by
Thomson (2003, 2015, 2019). Interestingly, a connection between claims
problems and streaming problems can be established (e.g., Berganti\~{n}os
and Moreno-Ternero, 2026a; Dietzenbacher and N\'{u}\~{n}ez-Lugilde, 2026).
Our work thus implies that the two settings (streaming problems and claims
problems) are indeed of a rather different nature.



\bigskip 

\bigskip 

\bigskip 

\bigskip 

\bigskip 

\newpage

\bigskip

\begin{thebibliography}{99}
\bibitem{} Abe, T., Lowing, D., Nakada, S., 2025. Fair allocation in
hierarchies: A compromise between marginalism and egalitarianism. Journal of
Mathematical Economics 119, 103128.

\bibitem{} Aguiar, L., 2017. Let the music play? Free streaming and its
effects on digital music consumption. Information Economics and Policy 41,
1-14.


\bibitem{ARW24} Aguiar, L., Reimers, I., Waldfogel, J., 2024. Platforms and
the transformation of the content industries. Journal of Economics \&
Management Strategy 33, 317-326.




\bibitem{AMMP22} Alaei, S., Makhdoumi, A., Malekian, A., Peke\v{c} S., 2022.
Revenue-sharing allocation strategies for two-sided media platforms:
pro-rata vs. user-centric. Management Science 68, 8699-8721.

\bibitem{BMT15} Berganti\~{n}os, G., Moreno-Ternero, J.D., 2015. The
axiomatic approach to the problem of sharing the revenue from museum passes.
Games and Economic Behavior 89, 78-92.

\bibitem{BM22} Berganti\~{n}os, G. and Moreno-Ternero J.D., 2022. Overall
monotonicity in sharing the revenues from broadcasting sports leagues.
European Journal of Operational Research 297, 338-346.

\bibitem{BM25a} Berganti\~{n}os, G. and Moreno-Ternero J.D., 2025a. Revenue
sharing at music streaming platforms. Management Science 77, 8319-8335.

\bibitem{BM25b} Berganti\~{n}os, G. and Moreno-Ternero J.D., 2025b. The
Shapley index for music streaming platforms. Information Economics and
Policy 71, 101142.

\bibitem{BM25c} Berganti\~{n}os, G. and Moreno-Ternero J.D., 2026a.
Streaming problems as (multi-issue) claims problems. European Journal of
Operational Research 329, 1030-1036.

\bibitem{BM25d} Berganti\~{n}os, G. and Moreno-Ternero J.D., 2026b. New
methods to compensate artists in music streaming platforms. Games and
Economic Behavior 159, 131-143.

\bibitem{} Chun, Y., Thomson, W., 1988. Monotonicity properties of
bargaining solutions when applied to economics. Mathematical Social Sciences
15, 11-27.


\bibitem{} Dietzenbacher, B., N\'{u}\~{n}ez-Lugilde, I., S\'{a}nchez-Rodr%
\'{\i}guez, E. 2026. Streaming platform games: construction of the core.
Annals of Operations Research forthcoming,
https://doi.org/10.1007/s10479-026-07385-6.

\bibitem{} Dietzenbacher, B., N\'{u}\~{n}ez-Lugilde, I., 2026. Solving
streaming platform problems as claims problems. Maastricht University,
Graduate School of Business and Economics. GSBE Research Memoranda No. 006 

\bibitem{} Feng, D., Ma, Z., 2026. How to fairly share streams: index
characterizations for music streaming problems. Mimeo. University of
Rochester.

\bibitem{} Flores-Szwagrzak, K., Moreno-Ternero, J.D., \O sterdal, L.P.,
2026. Technological Progress and Fair Compensation. CBS Working Paper.

\bibitem{GZ03} Ginsburgh V., Zang I., 2003. The museum pass game and its
value. Games and Economic Behavior 43, 322-325.

\bibitem{GMS24a} Gon\c{c}alves-Dosantos, J., Martinez, R., S\'{a}%
nchez-Soriano, J., 2025a. Measuring success in streaming platforms. Journal
of Economic Behavior \& Organization 232, 106941.

\bibitem{GMS24b} Gon\c{c}alves-Dosantos, J., Martinez, R., S\'{a}%
nchez-Soriano, J., 2025b. Revenue distribution in streaming. Omega 132,
103233.

\bibitem{K77} Kalai, E., 1977. Proportional solutions to bargaining
situations: interpersonal utility comparisons. Econometrica 45, 1623-1630.

\bibitem{KS77} Kalai, E., Smorodinsky, M., 1975. Other solutions to Nash's
bargaining problem. Econometrica 43, 513-518.

\bibitem{} Meggido, N., 1994. On the non-monotonicity of the bargaining set,
the kernel, and the nucleolus of a game. SIAM Journal of Applied Mathematics
27, 355-358.

\bibitem{MKAC22} Meyn, J., Kandziora, M., Albers, S., Clement, M., 2023.
Consequences of platforms' remuneration models for digital content: initial
evidence and a research agenda for streaming services. Journal of the
Academy of Marketing Science 51, 114-131.

\bibitem{} Moulin, H., Thomson, W., 1988. Can everyone benefit from growth?:
Two difficulties. Journal of Mathematical Economics 17, 339-345.

\bibitem{} O'Neill, B., 1982. A problem of rights arbitration from the
Talmud. Mathematical Social Sciences 2, 345-371.

\bibitem{} Roemer, 1986. Equality of resources implies equality of welfare.
Quarterly Journal of Economics 101, 751-784.


\bibitem{} Thomson W., 1999. Options-monotonic allocation rules. Mimeo.
University of Rochester.


\bibitem{} Thomson, W., 2003. Axiomatic and game-theoretic analysis of
bankruptcy and taxation problems: a survey, Mathematical Social Sciences 45,
249-297.


\bibitem{T15} Thomson, W., 2015. Axiomatic and game-theoretic analysis of
bankruptcy and taxation problems: An update. Mathematical Social Sciences
74, 41-59.

\bibitem{} Thomson W., 2019. \emph{How to divide when there isn't enough:
from Aristotle, the Talmud, and Maimonides to the axiomatics of resource
allocation}, Econometric Society Monograph. Cambridge University Press

\bibitem{} Thomson, W., Myerson, R., 1980. Overall monotonicity and
independence axioms, International Journal of Game Theory 9, 37-49.

\bibitem{Y85} Young, H.P., 1985. Monotonic solutions of cooperative games.
International Journal of Game Theory 14, 65-72.

\bibitem{} Young H.P., 1987. On dividing an amount according to individual
claims or liabilities, Mathematics of Operations Research 12, 398-414.

\bibitem{Y88} Young H.P., 1988. Distributive justice in taxation. Journal of
Economic Theory 44, 321-335.

\bibitem{Y26} Yu, H., 2026. On click-fraud under pro-rata revenue sharing
rule. Economic Theory Bulletin 14, 8.
\end{thebibliography}
\end{document}